\newtheorem{example}{Example}
\newtheorem{theorem}{Theorem}
\newcommand{\EE}[1]{\displaystyle \mathop{\mathbb{E}}_{#1}}
\definecolor{emphcol}{gray}{.75}
\newcommand{\argmax}{\textrm{argmax}}
\newcommand{\high}[1]{#1_{\mathsf{high}}}
\newcommand{\low}[1]{#1_{\mathsf{low}}}
\newcommand{\profit}{\mathsf{profit}}
\newcommand{\potmax}{\mathsf{POTMAX}}
\newcommand{\appsec}{Appendix}
\renewcommand{\cite}[1]{\citep{#1}}
\newtheorem{definition}{Definition}
\newtheorem{corollary}{Corollary}
\newtheorem{lemma}{Lemma}
\newtheorem{remark}{Remark}
\begin{document}
	
	\title{\bf 
		Airdrop  Games
	}


	\author{
		Sotiris Georganas \\
		IOG \\
		\texttt{sotiris.georganas@iohk.io}
		\and 
		  Aggelos Kiayias\\
		  University of Edinburgh, IOG\\
		  \texttt{akiayias@inf.ed.ac.uk}
		  \and
		  Paolo Penna\\
		  IOG\\
		  \texttt{paolo.penna@iohk.io}
		}

		
	\date{\today}

	\maketitle

	\begin{abstract}

Launching a new blockchain system or application is frequently facilitated 
by a so called {\em airdrop}, where the system designer 
chooses a  pre-existing set of potentially interested parties and  allocates
newly minted tokens to them with the expectation 
that they will participate in  the system --- such engagement, especially if it is of  significant level, facilitates the system and raises its value and also 
the value of its newly minted token,
hence benefiting the airdrop recipients. 
A number of challenging questions befuddle designers in this setting, such as how to choose the set of interested parties and how to allocate tokens to them. 
To address these considerations we put forward a game-theoretic model
for such {\em airdrop games}.
Our model can be used to guide
the designer's choices based on the way the system's value
depends on participation (modeled by a ``technology function'' in our framework) and the costs that participants incur. We identify both bad and good equilibria 
and identify the settings and the choices that can be made where the designer can influence the players 
towards good equilibria in an expedient manner. 

\end{abstract}

\section{Introduction}
Launching a new blockchain system is challenging as it requires the upfront contributions of different parties, without any guarantee that the system will be successful. The characteristics of such launches are as follows: 
\begin{itemize}
    \item There is a set of possibly interested parties.
    Participating incurs some \emph{cost}, hence the system designer performs an \emph{airdrop} of tokens to entice the participants: a certain amount of the available tokens are distributed in  advance to potential contributors, regardless of their (future) individual contribution to the system \cite{why}.\footnote{ The nature of  participation or contribution should be interpreted broadly and includes holding tokens, participating in governance, or actively running bespoke software that performs system functions.} Identifying the potential contributors typically piggybacks on an existing blockchain system e.g., as in ``restaking'' in Ethereum where new tokens are allocated based on existing staked ether holdings \cite{eigenlayer}, but more direct approaches have also been attempted, e.g., in worldcoin \cite{worldcoin},
    prospective users scan their retina in order to receive tokens.
    \item The eventual success of the system depends on the actual contributions and level of participation, which, in turn, reflects on the monetary value of the tokens received via the airdrop. The higher the overall participation of the players, the higher is the value of the new token, and thus also the value of the airdrop allocation received initially.  The dependency between system value and participation can be modeled by an underlying {\em technology function} that we make explicit below. 
\end{itemize}

Potential contributors thus face a dilemma: If they contribute, they incur a cost but (potentially) increase the value of their token allocation. Naturally, contributors should act strategically and contribute in a way that  maximizes utility. Several equilibria exist: in good equilibria, ``enough participation'' is achieved and the launch of the system ``succeeds'' while in the bad equilibria a complete breakdown of the system is possible. 

From the designer's perspective, some fundamental questions need to be addressed in order to understand how a project can be successfully launched:
\begin{quote}
    \emph{What is the level of contribution of the parties that we can reasonably expect, given a specific allocation?}
    \emph{How can this be influenced by different ``tokenomics'' policies that award larger or smaller amounts of tokens as part of the airdrop allocation?}
    \emph{What kind of technology functions are more favorable in terms of facilitating a successful launch?}
\end{quote} 
In this work, we formally address these questions via a novel game-theoretic model and its analysis. To illustrate the nature of the problems, consider the following technology function:
\begin{example}[Threshold Technologies.]\label{ex:threshold-intro} Consider a system technology that requires contributions from at least 50\% of the  contributors (the total number being $N=10$). If this \emph{threshold} is met, the system operates correctly, and the token's value is \emph{high}, \$10. Conversely, if the threshold is not reached, the system fails, and the token's value drops to \emph{low}, at 0. With an airdrop granting each participant 1 token and contribution cost $\alpha = 1$, two equilibria emerge: (i) no one contributes, since an individual contribution alone does not increase the token's value but incurs a cost of \$1, and (ii) exactly 50\% contribute, that is 5 players. If any one of them contemplated not contributing, that  would cause the token's value to drop from \$10 to 0 (a net profit change from 10 - 1= 9 to 0). Also none of the 5 players not contributing has any incentive to contribute, since they are already enjoying the high value without any cost. Clearly, the latter equilibrium is preferable, and its existence for higher costs is guaranteed only if the designer sets the airdrop properly (e.g., for participation costs of $\alpha = 20$ USD, an individual airdrop of at least 2 tokens would be required).
\end{example}

The threshold-based technology described above is natural, but we can consider \emph{other types} of technologies determining the system's value. For instance, \cite{ALABI201723} suggests that several systems follow \emph{Metcalfe's Law}: the value of a network is proportional to the square of the number $\ell$ of contributors (number of entities \emph{holding} the native token in a wallet). In this case, the token's value follows $t(\ell) = q \cdot \ell^2$, with $q$ a rescaling constant. The equilibria emerging with this technology are \emph{different} from the example above, leading to different trade-offs when deciding on the number of tokens to airdrop.

\subsection{Our Contribution}
Our contributions can be summarized as follows. 
\paragraph{Game-theoretic model (Section~\ref{sec:model}).}
We propose a  game-theoretic model for airdrops. The model incorporates the key feature that token allocations are issued in some \emph{new} token whose value: (i) is not determined at launch but, (ii) is affected by the actual participation or contribution of the potentially interested parties allocated these tokens according to some
{\em technology function}. 
The analysis of equilibria in our model (see below) is informative to the designer and answers basic questions like ``\emph{Who receives tokens/how many?}'' as posed in \cite{frowis2019operational}. Intuitively, the designer sets some ``eligibility'' criterion based on past information, which determines the number and the individual costs (``who receives tokens and why''), and  the corresponding airdrop allocation (``how many tokens'').

    \paragraph{Analysis  of equilibria (Section~\ref{sec:airdrop-general}).} We characterize the set of pure Nash equilibria for the general setting, as a function of \emph{amount of rewarded tokens}, the \emph{number of potential contributors}, their \emph{individual costs}, and the \emph{technology} which ``converts'' individual contributions into system value. We show that the model's general version corresponds to a \emph{potential game} (Theorem~\ref{th:potential-game}). Thus,  pure Nash equilibria always exist and are reached via  simple best response dynamics. We further characterize the set of pure Nash equilibria  in Section~\ref{sec:aridrop-pne}. It is worth noting that (i) we consider \emph{heterogeneous} costs, that is, players have different costs in general,  (ii) pure Nash equilibria do \emph{not} require players to know about others' costs but only about others' strategies (contributions), and (iii) these equilibria are quite natural as they arise from simple best response (as opposed to \emph{mixed} Nash equilibria for which no ``simple'' dynamics exist, and whose empirical support is comparably limited at the level of individual player behavior in the lab or field). These considerations are also fundamental for the designer, as a target.  
    
    \paragraph{Refined unique  (logit)  equilibria (Section~\ref{sec:airdop-logit}).} For some technologies, \emph{bad} equilibria where no player contributes coexist with \emph{good} equilibria where a sufficiently high level of players contribution is reached, thus making the system valuable. We consider a well-known class of ``noisy'' best response dynamics, termed \emph{logit dynamics} \citep{blume1993statistical,BLUME2003251} (see Section~\ref{sec:related-work} for further discussion) for our model. These seem natural in our context and do not require excessive sophistication from the players. We first show that, in the so-called \emph{vanishing noise regime}, these dynamics select only  \emph{stochastically stable} pure Nash equilibria, characterized by Theorem~\ref{th:stable-equilibria}, thus  discarding \emph{bad} equilibria in several cases (see below).  We also consider the so-called \emph{finite noise} regime and the corresponding \emph{unique} stationary equilibrium \cite{auletta2011convergence}. Under some mild restrictions on the model (Section~\ref{sec:binary}), we provide tight bounds on the \emph{time} for the dynamics to reach its equilibrium or a particular level of contribution (Section~\ref{sec:logit-time}).  Time is also a fundamental aspect for the designer, as the system needs to reach a ``sufficiently good'' state within the given launch period (after which the system is supposed to start its normal autonomous operations).

    \paragraph{Applications to relevant technology functions (Section~\ref{sec:threshold}).} Our model accommodates generic technology functions to express the token value depending on the actual contributions of the tokens' receivers. We apply our results to the important class of  \emph{threshold} technologies, a natural, simple  description of systems which need an ``initial minimal base'' to succeed  \citep{chaidos2023blockchain,crwodfundinggame,yan2021optimal,math9212757,chang2020economics,10.1109/TNET.2023.3274114,wang2023bundling}. Threshold technologies are a typical example where \emph{bad} equilibria with zero participation always exist, along with \emph{good} equilibria, so there is a need for theory to explain how successful systems can reach good equilibria. Threshold technologies represent a ``hard case'' in our setting, in the sense that bad equilibria persist for any airdrop reward amount. Logit dynamics provide a  formal argument that good equilibria are more likely to be chosen (Theorem~\ref{th:pot-max-threshold} and Corollary~\ref{cor:stable-theshold}) also in this hard case. 
    The analysis  further indicates when it is optimal for the designer to perform an airdrop at all (and the optimal rewards) in both the vanishing- and non-vanishing regimes (Sections~\ref{sec:threshold-vanishing} and \ref{sec:threshold-nonvanishing}).

       The analysis of threshold technologies highlights the need for \emph{low costs} to converge to a desired state within reasonable time (Section~\ref{sec:threshold-time}).   
       This highlights the benefits of recent technological developments like Ethereum's restaking \cite{eigenlayer} or Cardano's \emph{partnerchain} framework \citep{IOHK-PC-02}, which ``reuse'' contributors from a mainchain who are likely to incur lower costs (\appsec~\ref{sec:partnerchain}).

We stress that our results can be applied to other technology functions, including Metcalfe's Law mentioned above, and other examples in the literature (see \appsec~\ref{sec:other-technoligies}).

\subsection{Related Work}\label{sec:related-work}
\paragraph{Airdrops.} Airdrops are costly for the designer, as recording them on an existing chain incurs transaction fees, necessitating \emph{simple} allocation strategies \cite{frowis2019operational,designing}. Empirical studies suggest simple airdrops will remain common for “projects without established on-chain activity” \cite{evolutionary}, while others \cite{harder} recommend ``scaling rewards with costs'', aligning with our theoretical findings. Additional work on practical features and goals of airdrops includes \cite{rise,tierdrop,altruistic,designing}.

\paragraph{Related Games and Models.}
Our model can be seen as a variant of \emph{blockchain participation games} \citep{chaidos2023blockchain} and the \emph{combinatorial agency} model \citep{BABAIOFF2012999} in contract theory. In the former, players receive a \emph{monetary} reward, contrary to our setting where the rewards are tied to  the system value  (the variant of universal payments with no retraction is the closest to ours, whereas in other variants rewards are even more loosely tied to the system's success);  the system value is a threshold technology of (eligible) players actively contributing. In the combinatorial agency model, players receive again a \emph{monetary} reward conditioned on the success of the project, expressed by some ``success probability'' function on the contributing players. Consequently, the equilibria in these two models are different from ours and these results do not apply to our setting.

    A closely related class of problems is \emph{crowdfunding} games \cite{crwodfundinggame,yan2021optimal,math9212757,chang2020economics,10.1109/TNET.2023.3274114,wang2023bundling}. Similar studies concern public funding projects \citep{soundy2021game,DBLP:journals/tcs/BiloGM23} and public goods projects on networks \citep{bramoulle2007public,galeotti2010network,dall2011optimal,yu2020computing}. These models are mathematically equivalent to our games in the case where the designer \emph{cannot change the rewards}. Specific problems correspond to technology functions being S-shaped \citep{buragohain2003game} or a threshold function \citep{galeotti2010network}. Stochastic stability is studied in \cite{boncinelli2012stochastic}. All these problems (and ours) belong to rich  class of public goods games. These are generally computationally hard, even when the underlying (technology) function is specified by a network   \cite{10.1145/3465456.3467616,gilboa2022complexity,10.1145/3580507.3597780,do2024tight,10.1257/aer.100.4.1468}.

\paragraph{Logit Dynamics.}\label{eq:logit-dynamics-connections}
Logit dynamics have been largely studied in the context of games and equilibrium selection problem, that is, as a refinement of pure Nash equilibrium  (see e.g. \cite{blume1993statistical,BLUME2003251,10.1109/FOCS.2009.64,asadpour2009inefficiency,alos2010logit,alos2015robust,auletta2011convergence,auletta2012metastability,auletta2013mixing,auletta2013logit,okada2012log,coucheney2014general,ferraioli2016decentralized,MamageishviliP16,ferraioli2017social,alos2017convergence,DBLP:journals/ijgt/Penna18,kleer2021sampling}). This is usually done in two ways. The first is to consider the so-called \emph{vanishing noise} regimes  and a resulting set of \emph{stable equilibria}  (see e.g. \cite{blume1993statistical,alos2010logit,alos2015robust}). The second is to consider \emph{non-vanishing noise} regimes and the corresponding unique stationary distribution of the process as the equilibrium concept \cite{auletta2011convergence}.
The logit response model also finds applications in economics \citep{costain2019logit}, in  pricing algorithms \citep{muller2021dynamic,van2022price}, and coalitional bargaining \citep{sawa2019stochastic}.

\section{Modeling Airdrop Games}\label{sec:model}
The model captures the following key aspects of airdrops:
(i) The designer chooses the amount of tokens to be airdropped to potential contributors. 
   (ii) Potential contributors decide whether to perform a (costly) task for the system. The resulting system value depends on the total contribution and the underlying ``technology'' of the project. 
    (iii) Contributors maximize utility, resulting in an \emph{equilbrium} and system value.

The designer faces a tradeoff between the airdrop 
amount and the resulting system value (too small airdrops do not incentivize enough contributors and thus the project fails, while giving away all tokens is not optimal either because it minimizes profit).

\newcommand{\R}{\mathbb{R}}
\newcommand{\vv}[1]{\mathbf{#1}}
\subsection{Parameters and Underlying Game}

    \paragraph{Contributors.} There is a set of $n$  potential contributors (players): Each contributor chooses her  actual contribution (strategy)  $a_i\in A_i \subseteq \R^+$, incurring in a cost of   $c_i \cdot a_i$ where $c_i$ denotes the per unit cost of $i$.  
	\paragraph{System (technology) value.} The overall value of the system depends on each individual effort or contribution, i.e., on the strategy profile $a= (a_1,\ldots,a_n)$ and it is equal to $V(a)$ for some monotone non-decreasing function (higher contributions yield the same or higher value). 
	\paragraph{Token value (and total supply).} Given the token total supply (the overall number of tokens of in system) $T_{tot}$, the value or price of the token is
	\begin{align}\label{eq:token-val}
	t(a) := \frac{	V(a)}{T_{tot}} \ . 
	\end{align}
	\paragraph{Airdrop (Token) Allocation.} The designer allocates some constant fraction $\rho\in[0,1]$ of the overall token supply as an airdrop, i.e.,  to be distributed equally among the players. Thus, each player receives $\gamma$ tokens, where
	\begin{align}\label{eq:rewards}
		\gamma := \frac{\rho\cdot T_{tot}}{n} && \rho \in [0,1] \ . 
	\end{align}
 (For the sake of simplicity, we allow $\gamma$ to be a fractional number, and ignore rounding effects.)
It is worth noting  that the monetary reward (number of tokens times the token price) is \emph{independent} on the token total supply $T_{tot}$, and it equals to a fraction $\rho/n$ of the system value,
\begin{align}\label{eq:inv-monetary-rewards}
	\gamma \cdot t(a)  \stackrel{\eqref{eq:token-val}+\eqref{eq:rewards}}{=} \frac{\rho\cdot T_{tot}}{n}\cdot \frac{V(a)}{T_{tot}} = \frac{\rho}{n} \cdot V(a) \ . 
\end{align}
	\paragraph{Utilities.} Players' utilities equals the monetary reward received (number of tokens times the token value) minus the incurred cost
	\begin{align}\label{eq:utiliy}
		u_i(a) := \gamma \cdot t(a) - c_i \cdot a_i \stackrel{\eqref{eq:inv-monetary-rewards}}{=} \frac{\rho}{n}\cdot V(a) - c_i \cdot a_i  \ . 
	\end{align}
    \paragraph{Equilibria.} A strategy profile $a=(a_1,\ldots,a_n)$ is a pure Nash equilibrium if no player $i$ can increase her utility by changing her strategy $a_i$, that is, 
    \begin{align}\label{eq:equilibrium-def}
    	u_i(a) \geq  u_i(a') && \stackrel{\eqref{eq:utiliy}}{\Longleftrightarrow} &&  \frac{\rho}{n}\cdot (V(a) - V(a')) \geq  c_i \cdot (a_i -a_i')
    \end{align}
for all $i$ and all $a' = (a_1,\ldots, a_{i-1},a_i',a_{i+1},\ldots,a_n)$. 

\paragraph{Logit dynamics.}
Logit dynamics  \citep{blume1993statistical,BLUME2003251} are a kind of   ``noisy'' best response  dynamics where players have some \emph{inverse noise} or \emph{learning rate}
$\beta\geq 0$ and each of them responds according to a so-called \emph{logit response} 
\begin{align}\label{eq:logit-response}
    p_i^{\beta}(a_i|a_{-i}) = \frac{\exp(\beta u_i(a_i,a_{-i}))}{Z_i^\beta}
\end{align}
where $Z_i^{\beta}$ is a normalizing factor so that the above formula is a probability distribution, and $(x,a_{-i}):=(a_1,\ldots,a_{-i},x,a_{i+1},\ldots,a_n)$. For $\beta=0$, players choose a strategy at random with uniform distribution, while for $\beta\rightarrow \infty$, they tend to the best response rule.\footnote{In case multiple best response exist, the corresponding player chooses any of them with uniform distribution.} At each step of the dynamics, a randomly chosen player revises her current strategy according to the logit response above \eqref{eq:logit-response}.  
Logit dynamics converge to a \emph{unique equilibrium} $\pi^{\beta}$ given by the  \emph{stationary distribution}  of the underlying Markov chain:  $\pi^{\beta}(a)$ is the probability of players selecting profile $a$  after sufficiently many steps of revisions (``learning'') have been performed. Note that this depends on the parameter $\beta$.  For the class of exact potential games, in the vanishing noise regime ($\beta\rightarrow \infty$), equilibrium $\pi^{\beta}$ concentrates on the \emph{subset} of pure Nash equilibria whose potential is optimal \citep{blume1993statistical,BLUME2003251}.  The  \emph{mixing time} of the underlying Markov chain is the time required by the dynamics to reach the equilibrium $\pi^{\beta}$ starting from \emph{any} state \citep{LevinPeresWilmer2006,auletta2011convergence}.

\subsection{Objectives and Metrics} There are different (possibly conflicting) metrics to evaluate system performance, given contributions,  costs, the designer's profit etc. 
\paragraph{System Value.} This is the value specified by the technology function $V(a)$ as a function of all contributions. 
    \paragraph{Social Cost.} This is the sum of all players costs,
    \begin{align}\label{eq:SC}
        SC(a,c) := \sum_i c_i \cdot a_i \ . 
    \end{align}
    \paragraph{Users' Welfare.} This is the sum of all players' utilities \eqref{eq:utiliy},\begin{align}
        \label{eq:SW}
        UW(a,\rho,c):= \sum_i u_i(a) \stackrel{\eqref{eq:utiliy}+\eqref{eq:SC}}{=} \rho\cdot V(a) - SC(a,c) &\ . 
    \end{align}
    
    \paragraph{Profit.} This is the value of the remaining tokens remaining with the designer, after subtracting the airdropped tokens and the cost $d_V$ for developing the technology $V(\cdot)$, 
    \begin{align}\label{eq:profit}
       \profit(a,\rho) = (1-\rho) \cdot V(a)  - d_V\ .
    \end{align}
    The designer strategically chooses the airdrop allocation $\rho$ aiming to maximize its profit (utility) defined as in \eqref{eq:profit}. 
    \footnote{Note that we still assume the designer to move first by announcing the airdrop $\rho$, and then the (other) players will reach some  equilibrium accordingly.}
We consider which system values can be achieved given the underlying technology, players' costs, and their utility maximizing strategies. Note that the designer's profit can be \emph{negative} (also when the former does not provide any reward, $\rho=0$). This corresponds to inherently ``bad'' projects that are  destined to fail and should not be implemented. More generally, we consider a technology ``implementable'' or ``profitable'' if there is some equilibrium (also in randomized sense -- Section~\ref{sec:airdop-logit})  which yields a positive profit to the designer. 


\subsection{Special Cases}
We shall sometime consider the following relevant restrictions on the technology functions, the possible contribution levels, the possible costs, and combinations thereof. 

\paragraph{Anonymous Technologies.}
    It is natural to assume that the value of the system simply depends on the overall level of contribution $\ell = \sum_i a_i$, that is, $V(a)=V(a')$ whenever $\sum_i a_i = \sum_i a_i'$. We refer to this case as the \emph{anonymous} technology function. With slight abuse of notation, we write $V(\ell)$ instead of $V(a)$.

\paragraph{Binary Contributions.}
    In some settings where potential contributors have only two options (strategies), either to contribute ($a_i=1$) or to not contribute ($a_i=0$), we refer to this restriction as \emph{binary contributions}.

\paragraph{Uniform Costs.}
    It is natural to consider equal cost for all (e.g. if they are fully determined by the type of hardware/resources required). This means $c_i = \alpha$, with $\alpha > 0$.

\section{Airdrop games: Main characteristics}\label{sec:airdrop-general}
In this section, we consider games with utilities in \eqref{eq:utiliy}, airdrop allocations \eqref{eq:rewards} in full generality. We show that these games are always potential games (Theorem~\ref{th:potential-game}), implying that (i) best/better response dynamics always converge to pure Nash equilibria and (ii) logit dynamics equilibria can be characterized in terms of potential, implying that bad equilibria are not selected under vanishing noise (Section~\ref{sec:airdop-logit}).

\begin{theorem}\label{th:potential-game}
	For airdrop allocation \eqref{eq:rewards}, the game in \eqref{eq:utiliy} with arbitrary efforts and any technology function  is an exact potential game with potential function 
 \begin{align}\label{eq:unifrom-rew-potential}
		\phi(a) := \gamma \cdot t(a) - SC(a) = \frac{\rho}{n} \cdot V(a) - SC(a)\ .
	\end{align}
\end{theorem}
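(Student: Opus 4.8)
The plan is to verify directly the defining identity of an exact potential game: for every player $i$, every profile $a_{-i}$ of the others, and any two strategies $a_i, a_i' \in A_i$,
\begin{align*}
u_i(a_i,a_{-i}) - u_i(a_i',a_{-i}) = \phi(a_i,a_{-i}) - \phi(a_i',a_{-i}) \ .
\end{align*}
The cleanest route is to rewrite the utility as $u_i(a) = \phi(a) + g_i(a_{-i})$ for some function $g_i$ that does not depend on $a_i$, since this decomposition is equivalent to $\phi$ being an exact potential. Starting from \eqref{eq:utiliy} and the definition \eqref{eq:unifrom-rew-potential}, I would compute
\begin{align*}
u_i(a) - \phi(a) = \left(\frac{\rho}{n}V(a) - c_i a_i\right) - \left(\frac{\rho}{n}V(a) - \sum_{j} c_j a_j\right) = \sum_{j \neq i} c_j a_j \ ,
\end{align*}
and observe that the right-hand side is a function of $a_{-i}$ only. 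Hence $u_i(a) = \phi(a) + g_i(a_{-i})$ with $g_i(a_{-i}) := \sum_{j\neq i}c_j a_j$, which establishes the claim.

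Alternatively, for a fully self-contained check one can expand both sides of the potential identity: the term $\frac{\rho}{n}V(a)$ appears identically in $u_i$ and in $\phi$, so it contributes $\frac{\rho}{n}(V(a_i,a_{-i}) - V(a_i',a_{-i}))$ to each difference; and since a unilateral deviation of $i$ leaves $a_j$ (hence $c_j a_j$) unchanged for all $j\neq i$, the social-cost term $SC(a) = \sum_j c_j a_j$ in $\phi$ changes by exactly $c_i a_i - c_i a_i'$, matching the $-c_i(a_i - a_i')$ term coming from $u_i$. Both differences therefore equal $\frac{\rho}{n}(V(a_i,a_{-i}) - V(a_i',a_{-i})) - c_i(a_i - a_i')$. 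The second equality in \eqref{eq:unifrom-rew-potential} is just the substitution $\gamma \cdot t(a) = \frac{\rho}{n}V(a)$ from \eqref{eq:inv-monetary-rewards}, so no separate argument is needed.

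I do not anticipate a genuine obstacle here: the statement is a direct consequence of the additive, cost-separable structure of the utilities in \eqref{eq:utiliy}, where the only player-specific term $c_i a_i$ is already ``internalized'' in the potential via $SC(a)$. The only thing to be careful about is invoking the correct definition of an exact potential game (as opposed to ordinal or weighted potential), and noting that the argument makes no assumption on $V(\cdot)$ beyond it being a well-defined function of the profile — in particular monotonicity, anonymity, or the binary-contribution restriction are not used, which is exactly why the theorem holds for "arbitrary efforts and any technology function." Existence of pure Nash equilibria and the characterization of logit equilibria via $\phi$ then follow from the standard theory of potential games, as stated in the surrounding text.
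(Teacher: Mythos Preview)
Your proof is correct and matches the paper's argument: the paper verifies the identity $u_i(a)-u_i(a')=\phi(a)-\phi(a')$ by adding and subtracting $\sum_{j\neq i}c_j a_j$, which is exactly your ``alternative'' expansion, while your primary decomposition $u_i(a)=\phi(a)+\sum_{j\neq i}c_j a_j$ is just a slightly cleaner repackaging of the same computation.
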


\subsection{Characterization of pure Nash equilibria}\label{sec:aridrop-pne}

The next theorem characterizes the set of pure Nash equilibria. Intuitively, airdrop allocations should be neither too high (otherwise players can benefit by increasing their contribution) nor too low (otherwise they can benefit by reducing their contribution). Note an important \emph{asymmetry} between the two cases: increasing the contribution is never beneficial when the system's value does not change, while decreasing the contribution is always advantageous under these conditions.

\begin{theorem}\label{th:Nash-equilibria}
    For any technology function and  arbitrary efforts, and for airdrop allocations \eqref{eq:rewards}, the set of pure Nash equilibria is given by the strategy profiles $a$ such that the following two conditions hold for all $i$:
    \begin{enumerate}
        \item For all  $a^+ = (a_i^+,a_{-i})$ with  $a_i^{+} > a_i$: \newline
        $
            \frac{\rho}{n} \leq  c_i \cdot \frac{a_i^+ -a_i}{V(a^+) - V(a)}  $ whenever  $V(a) < V(a^+)$ .
        \item For all  $a^- = (a_i^-,a_{-i})$ with  $a_i^{-} < a_i$: \newline
        $
        \frac{\rho}{n}  \geq  c_i \cdot \frac{a_i -a_i^-}{V(a) - V(a^-)} $ and $V(a) > V(a^-) $ .
    \end{enumerate}
\end{theorem}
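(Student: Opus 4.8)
The plan is to derive the characterization directly from the equilibrium definition \eqref{eq:equilibrium-def}, splitting the deviations of each player $i$ into those that increase $a_i$ (``upward'' deviations $a^+$) and those that decrease it (``downward'' deviations $a^-$), and then simplifying using monotonicity of $V$. By definition, $a$ is a pure Nash equilibrium iff for every $i$ and every unilateral deviation $a'$ we have $\frac{\rho}{n}(V(a)-V(a'))\geq c_i(a_i-a_i')$. I would first observe that the case $a_i'=a_i$ is vacuous, so it suffices to treat $a_i'>a_i$ and $a_i'<a_i$ separately.

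For an upward deviation $a^+=(a_i^+,a_{-i})$ with $a_i^+>a_i$, monotonicity of $V$ gives $V(a^+)\geq V(a)$, so the right-hand side $c_i(a_i-a_i^+)$ is negative while $\frac{\rho}{n}(V(a)-V(a^+))\leq 0$. The inequality $\frac{\rho}{n}(V(a)-V(a^+))\geq c_i(a_i-a_i^+)$ rearranges to $\frac{\rho}{n}(V(a^+)-V(a))\leq c_i(a_i^+-a_i)$. When $V(a^+)=V(a)$ this reads $0\leq c_i(a_i^+-a_i)$, which holds automatically since $c_i>0$ and $a_i^+>a_i$ — this is exactly the ``asymmetry'' remark in the text, that raising one's contribution is never profitable when value is unchanged. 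When $V(a^+)>V(a)$ we may divide by $V(a^+)-V(a)>0$ to obtain condition~1. For a downward deviation $a^-=(a_i^-,a_{-i})$ with $a_i^-<a_i$, monotonicity gives $V(a)\geq V(a^-)$. If $V(a)=V(a^-)$, the equilibrium inequality becomes $0\geq c_i(a_i-a_i^-)$, which is false since $c_i>0$ and $a_i>a_i^-$; hence at a Nash equilibrium every strict downward deviation must strictly decrease the value, i.e. $V(a)>V(a^-)$, which is the second half of condition~2. Given $V(a)>V(a^-)$, dividing the equilibrium inequality $\frac{\rho}{n}(V(a)-V(a^-))\geq c_i(a_i-a_i^-)$ by $V(a)-V(a^-)>0$ yields the first half of condition~2. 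Conversely, conditions~1 and~2 together with the vacuous $a_i'=a_i$ case reconstitute \eqref{eq:equilibrium-def} for all deviations, so the equivalence is complete.

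I do not expect a serious obstacle here: the argument is essentially a careful case analysis of sign conditions, and the only subtle point is noticing that monotonicity forces $V(a)>V(a^-)$ at equilibrium (so that the ``and $V(a)>V(a^-)$'' clause in condition~2 is not an extra hypothesis but a \emph{consequence} that must be stated as part of the characterization), whereas in condition~1 the corresponding equality case $V(a)=V(a^+)$ is harmless and is simply excluded by the ``whenever $V(a)<V(a^+)$'' qualifier. If anything needs care, it is making sure the quantifiers are handled correctly — the characterization must hold for \emph{all} upward and \emph{all} downward deviations of \emph{every} player simultaneously — but this is immediate once the per-deviation equivalences above are established.
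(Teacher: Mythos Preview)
Your proposal is correct and follows exactly the same approach as the paper: split unilateral deviations into upward and downward cases, apply the equilibrium inequality \eqref{eq:equilibrium-def}, and use monotonicity of $V$ to handle the sign of $V(a)-V(a')$ in each case. The paper's own proof is a two-sentence sketch saying precisely this; your write-up simply fills in the details---in particular the observation that equality $V(a)=V(a^-)$ is incompatible with equilibrium (forcing the ``and $V(a)>V(a^-)$'' clause), which the paper only alludes to via the ``asymmetry'' remark preceding the theorem.
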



The next corollary provides a more convenient characterization for certain restrictions of interest.

\begin{corollary}
     For any technology function and  binary efforts, and for airdrop allocations \eqref{eq:rewards}, the set of pure Nash equilibria is given by the strategy profiles $a$ such that the following two conditions hold:
      \begin{align}
     \text{For all $i$ such that } a_i=0: && 
          \frac{\rho}{n} & \leq    \frac{c_i}{V(1,a_{-i}) - V(a)} \ \ \text{ or } \ \ V(1,a_{-i}) = V(a)   \ ;
          \\
         \text{For all $j$ such that }  a_j = 1: && 
          \frac{\rho}{n}  & \geq   \frac{c_j}{V(a) - V(0,a_{-j})}  \ \  \text{ and } \ \ V(0,a_{-j}) < V(a) \ . 
     \end{align}
    Moreover, for any anonymous technology function and binary efforts,  the set of strategy profiles $a$ which are an  equilibrium  corresponds to those satisfying these two conditions:
    \begin{enumerate}
    \item For $\ell < n$:  
         \begin{align}
         	\frac{\rho}{n}  \leq    \frac{c_{\min}^{(0)}}{V(\ell+1) - V(\ell)} \ \ \text{ or } \ \ V(\ell+1)= V(\ell) ,
         \end{align}
     where $c_{\min}^{(0)} = \min \{c_i : \ a_i = 0\}$ is the smallest cost among players not contributing.
     \item For  $\ell >0$:  
    \begin{align}
    	\frac{\rho}{n}   \geq   \frac{c_{\max}^{(1)}}{V(\ell) - V(\ell-1)} \ \ \text{ and } \ \ V(\ell-1) < V(\ell)  ; 
    \end{align}
    where $c_{\max}^{(1)} = \max \{c_i : \ a_i = 0\}$ is the largest cost among players contributing.
	\end{enumerate}
\end{corollary}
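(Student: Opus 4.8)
The corollary is a specialization of Theorem~\ref{th:Nash-equilibria} to the binary-efforts setting, followed by a further specialization to anonymous technologies. The plan is to derive the first pair of conditions by simply instantiating the two conditions of Theorem~\ref{th:Nash-equilibria} at the only nontrivial deviations available under binary efforts, and then to collapse the per-player conditions into a single pair of conditions per contribution level using anonymity.

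First I would handle the binary case. With $a_i \in \{0,1\}$, a player with $a_i = 0$ has exactly one upward deviation, namely $a_i^+ = 1$, and no downward deviation; conversely a player with $a_j = 1$ has exactly one downward deviation, $a_j^- = 0$, and no upward deviation. So Condition~1 of Theorem~\ref{th:Nash-equilibria} applied to a non-contributor $i$ reads $\frac{\rho}{n} \le \frac{c_i \cdot (1-0)}{V(1,a_{-i}) - V(a)}$ whenever $V(a) < V(1,a_{-i})$, which is exactly the displayed first condition (the ``or $V(1,a_{-i}) = V(a)$'' branch covering the case where the inequality is vacuous because the deviation does not change the system value, and monotonicity of $V$ rules out $V(1,a_{-i}) < V(a)$). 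Symmetrically, Condition~2 applied to a contributor $j$ reads $\frac{\rho}{n} \ge \frac{c_j \cdot (1-0)}{V(a) - V(0,a_{-j})}$ together with $V(a) > V(0,a_{-j})$, which is the second displayed condition. Here I would flag the \emph{asymmetry} already noted in the text before the theorem: a non-contributor is happy whenever a unit deviation is unprofitable \emph{or} value-neutral, while a contributor needs the stronger requirement that her contribution strictly helps and that the reward covers her cost --- otherwise dropping out is strictly better since it saves the cost $c_j$ at no value loss.

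Next I would pass to the anonymous case, where $V$ depends only on $\ell = \sum_i a_i$. If $a$ is a profile with $\ell$ contributors, then for every non-contributor $i$ we have $V(1,a_{-i}) = V(\ell+1)$ and $V(a) = V(\ell)$, so the first binary condition becomes $\frac{\rho}{n} \le \frac{c_i}{V(\ell+1)-V(\ell)}$ (or $V(\ell+1)=V(\ell)$), uniformly in the \emph{value} of the bound but not in $c_i$. Since this must hold for \emph{all} non-contributors simultaneously, and the right-hand side is decreasing in $c_i$, the binding constraint is the smallest such cost, i.e.\ $c_{\min}^{(0)}$; this yields condition~1 for $\ell < n$ (the range where non-contributors exist). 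Likewise, for every contributor $j$ we have $V(0,a_{-j}) = V(\ell-1)$, so the second binary condition becomes $\frac{\rho}{n} \ge \frac{c_j}{V(\ell)-V(\ell-1)}$ and $V(\ell-1) < V(\ell)$; requiring this for all contributors and noting the right-hand side is increasing in $c_j$, the binding constraint is the largest contributing cost $c_{\max}^{(1)}$, giving condition~2 for $\ell > 0$. I would also remark that $V(\ell-1) < V(\ell)$ is required separately (not folded into an ``or''), matching the asymmetry above.

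I do not anticipate a genuine obstacle: the only things to be careful about are (i) invoking monotonicity of $V$ to discard the $V(a^+) < V(a)$ possibility in Condition~1 and to justify writing the ``or'' branch as an equality rather than an inequality, and (ii) correctly identifying which extremal cost is binding in each of the two anonymous conditions (minimum over non-contributors for the upward-deviation constraint, maximum over contributors for the downward one), which is where a sign/direction slip would be easy to make. A one-line sanity check --- e.g.\ verifying that in Example~\ref{ex:threshold-intro} these conditions reproduce the two claimed equilibria --- would round off the argument.
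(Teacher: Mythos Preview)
Your proposal is correct and matches the paper's intended derivation: the corollary is stated without a separate proof, as it follows immediately from Theorem~\ref{th:Nash-equilibria} by exactly the specialization you outline (instantiate at the unique nontrivial binary deviation for each player, then use anonymity to reduce the per-player constraints to the extremal-cost ones). One small slip to fix: you write that the right-hand side $c_i/(V(\ell+1)-V(\ell))$ is \emph{decreasing} in $c_i$, but it is increasing; your conclusion that the binding upper bound on $\rho/n$ occurs at $c_{\min}^{(0)}$ is still correct, since the smallest $c_i$ yields the smallest, hence tightest, upper bound --- precisely the sign/direction slip you warned yourself about.
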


\begin{example}[linear technology with heterogeneous costs]\label{ex:linear-heterogeneous}
    For  linear technologies, $V(\ell) = \lambda_V \cdot \ell$ with $\lambda_V>0$,  the set of pure Nash equilibria is characterized by the $\ell^*$ such that (w.l.o.g. assume $c_1\leq c_2\leq \cdots \leq c_n$) these inequalities hold:
    $$
        \frac{c_{\ell^*}}{\lambda_V} \leq \frac{\rho}{n} \leq \frac{c_{\ell^*+1}}{\lambda_V} \ . 
    $$
    The optimal strategy (profit maximizing) for the designer is to choose the minimum $\rho$ satisfying the above condition, thus making the first inequality tight:
        $$\rho^* = \argmax_{\ell \in [n]} (1-\rho^*_\ell) \cdot \lambda_V \cdot \ell \ ,   \rho^*_\ell = n \cdot \frac{c_{\ell^*}}{\lambda_V} ,$$ which is equivalent to set $ 
        \rho^* = \argmax_{\ell \in [n]} (\lambda_V-n\cdot c_{\ell}) \cdot  \ell $ .
\end{example}

\subsection{Logit response equilibria}\label{sec:airdop-logit}
We next consider logit dynamics and the corresponding equilibria. Specifically, for the vanishing noise regime ($\beta\rightarrow \infty$), the dynamics selects a subset of  \emph{stochastically stable} pure Nash equilibria. 

\begin{theorem}\label{th:stable-equilibria}
For any technology function and  arbitrary efforts, for airdrop allocations \eqref{eq:rewards}, and  for vanishing noise ($\beta \rightarrow \infty$), the dynamics converges with probability which tends to $1$ to a state of maximal potential.~\footnote{Note that in this work we do not change sign in the definition of exact potential game and related dynamics. }   In particular, the corresponding stationary distribution $\pi_\rho$, depending on the airdrop allocation $\rho$, satisfies
\begin{align}
\lim_{\beta \rightarrow \infty}
    \left(\pi_\rho(a) \right)= \begin{cases}
        \frac{1}{|\potmax_\rho|} & \text{ for } a \in \potmax_\rho \\
        0 & \text{ for } a \not \in \potmax_\rho
    \end{cases} 
\end{align}
where $\potmax_\rho:= \argmax_{a}  \{\phi(a)\} $ is a subset of equilibria which depends on the airdrop allocation $\rho$ as follows: 
\begin{align}
    \potmax_\rho = \argmax_{a} \left\{\frac{\rho}{n}\cdot V(a) - SC(a)\right\} \ . 
\end{align}
\end{theorem}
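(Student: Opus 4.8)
The plan is to derive Theorem~\ref{th:stable-equilibria} from the general theory of logit dynamics on exact potential games, which applies since Theorem~\ref{th:potential-game} already established that our game is an exact potential game with potential $\phi(a) = \frac{\rho}{n} V(a) - SC(a)$. The key classical fact (\citealp{blume1993statistical,BLUME2003251}) is that the stationary distribution of the logit Markov chain on a finite-strategy exact potential game is the Gibbs measure $\pi^\beta(a) \propto \exp(\beta\,\phi(a))$; this holds because the chain is reversible with respect to this measure, which one checks by verifying detailed balance for the single-player revision transitions using the logit response formula~\eqref{eq:logit-response} and the exact-potential identity $u_i(a_i,a_{-i}) - u_i(a_i',a_{-i}) = \phi(a_i,a_{-i}) - \phi(a_i',a_{-i})$.

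Given the Gibbs form, the rest is a routine limiting argument. First I would write $\pi^\beta(a) = \frac{\exp(\beta\phi(a))}{\sum_{a'}\exp(\beta\phi(a'))}$ and factor out $\exp(\beta\phi^*)$ where $\phi^* := \max_{a'}\phi(a')$, obtaining $\pi^\beta(a) = \frac{\exp(\beta(\phi(a)-\phi^*))}{\sum_{a'}\exp(\beta(\phi(a')-\phi^*))}$. As $\beta\to\infty$, each exponent $\beta(\phi(a')-\phi^*)$ tends to $0$ if $a'\in\potmax_\rho$ and to $-\infty$ otherwise, so the denominator tends to $|\potmax_\rho|$ and the numerator tends to $1$ if $a\in\potmax_\rho$ and $0$ otherwise; this yields exactly the claimed limiting distribution $\frac{1}{|\potmax_\rho|}$ on $\potmax_\rho$ and $0$ off it. The identification $\potmax_\rho = \argmax_a\{\frac{\rho}{n}V(a) - SC(a)\}$ is then immediate by substituting the explicit potential from~\eqref{eq:unifrom-rew-potential}. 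One should also note that every element of $\potmax_\rho$ is in particular a pure Nash equilibrium (a potential maximizer cannot be improved by any unilateral deviation, since such a deviation would raise the potential), which justifies calling these ``stochastically stable pure Nash equilibria.''

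The statement that the dynamics ``converges with probability tending to $1$ to a state of maximal potential'' is just the above limit read qualitatively, combined with the fact that the logit chain is ergodic (irreducible and aperiodic) for every finite $\beta>0$ so that it does converge to $\pi^\beta$ from any start; for finite strategy sets this irreducibility follows because every logit response assigns positive probability to every available strategy.

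The main subtlety to be careful about — and the step I expect to need the most attention — is that the clean classical results are typically stated for games with \emph{finite} strategy sets, whereas our model allows $A_i\subseteq\R^+$ to be arbitrary (in particular infinite or continuous), and~\eqref{eq:logit-response} then involves an integral normalizer $Z_i^\beta$ rather than a sum, with potential issues of integrability of $\exp(\beta u_i)$ and existence of a maximizer of $\phi$. I would therefore either (i) restrict attention to the finite (or compact-plus-continuity) case where these technicalities vanish and the Gibbs-stationarity and limit arguments go through verbatim, or (ii) remark that in the settings of interest later in the paper (binary contributions, Section~\ref{sec:binary}, and threshold/Metcalfe technologies) the strategy sets are finite, so the cited theorems apply directly. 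Everything else is bookkeeping: reversibility via detailed balance, then the elementary $\beta\to\infty$ asymptotics of a finite Gibbs sum.
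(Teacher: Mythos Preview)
Your proposal is correct and follows essentially the same approach as the paper: invoke Theorem~\ref{th:potential-game} to identify the exact potential, then appeal to the classical results of \citet{blume1993statistical,BLUME2003251} that the logit stationary distribution is the Gibbs measure $\pi^\beta(a)\propto\exp(\beta\phi(a))$ and hence concentrates uniformly on $\argmax_a\phi(a)$ as $\beta\to\infty$. The paper's own proof is in fact just those two sentences; your write-up is more explicit (spelling out the factor-out-$\phi^*$ limit and the ergodicity/reversibility justification), and your caveat about finite versus continuous strategy sets is a legitimate technical point that the paper does not address.
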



The next example shows that, the bad equilibria in which no player contributes are selected with \emph{vanishing probability}, provided the airdrop allocation $\rho$ is sufficiently high.

\begin{example}[rule out bad equilibria]\label{ex:no-bad-eq}
	We consider a two-player game with a simple anonymous technology function,  binary contributions,  and  uniform costs ($c_i = \alpha$). The technology function is an \emph{AND} technology where a high (non-zero) value is achieved only if both players contribute, 
	$
		 V(0) = V(1) = 0 $ and $ V(2) > 0, 
	$
which implies that $(0,0)$ and $(1,1)$ are the only two Nash equilibria. 
Theorem~\ref{th:stable-equilibria} says that  logit-response dynamics with vanishing noise ($\beta\rightarrow \infty$)  the good equilibrium $(1,1)$ is reached with  probability tending to $1$ if and only if the airdrop allocation $\rho$ satisfies
 $\frac{\rho}{2} \cdot (V(2) - V(0)) > 2\alpha$.  
Also note that, since $\rho \leq 1$,  this holds only for $V(2) - V(0) > 4\alpha$. 
\end{example}

To repeat the intuition here, when contributors are prone to some experimentation, instead of just picking best responses, the system is likely to end up in the high value equilibrium instead of the low one.

\subsubsection{Profit and optimal airdrops revisited (logit dynamics)} We observe the same tradeoff regarding the optimal choice of airdrop  $\rho$. A too small $\rho$ may result in a bad equilibrium in which none cooperates, thus a system with small value $t(0)$. A very high $\rho$, on the other hand, will leave the system designer with only a tiny fraction of the system ownership (value). 
In order to deal with the logit (randomized) equilibrium $\pi_\rho$ we simply consider the expected value of the system, and extend the definition of profit \eqref{eq:profit} in the natural way:
\begin{align}
    \profit(\rho) = (1-\rho)V(\pi_\rho) - d_V ,  
    &&  \text{ where } V(\pi_\rho) := \EE{a\sim \pi_\rho}[V(a)]  \ . 
    \label{eq:profit-expected}
\end{align}

\section{Binary efforts and uniform costs}\label{sec:binary}
We consider a \emph{binary effort} scenario where players contribute or not ($a_i\in \{0,1\}$) and costs are \emph{uniform} ($c_i=\alpha$ for all $i$). 
Under these restrictions, our model is characterized by three parameters:
(i) $\alpha$ is the cost per player when contributing;
  (ii) $\beta$ is the rationality level of the players;
    (iii) $\rho$ is the airdrop allocation -- the corresponding number of tokens $\gamma$  is given by \eqref{eq:rewards}.
While the rationality level $\beta$ is exogenous to the system, the designer can change $\rho$, and costs  $\alpha$ are part of the airdrop design (see \appsec~\ref{sec:partnerchain} for practical examples where the designer can reduce $\alpha$).  

\subsection{Logit equilibria and convergence time}\label{sec:logit-time}
In this section, we make use of \emph{birth and death processes} to analyze the case of binary effort and airdrop rewards, when also the technology function is \emph{symmetric}. In this case, the dynamics boils down to a birth and death process,  where we keep track of the number $\ell$ of actively participating players in a given profile $a$, i.e., $\ell = \sum_i a_i$ and $a_i\in\{0,1\}$.  Hence, the birth and death process has $n+1$ states, $\ell \in [0,n]$. The stationary distribution is thus
\begin{align}\label{eq:birth-and-death-stationary}
    \hat{\pi}(\ell) 
     := \binom{n}{\ell}\cdot \pi(\ell) 
\end{align}
where $\pi(\ell)$ is the stationary distribution of a generic  state $a$ with $\sum_i a_i = \ell$, and the binomial coefficient counts the number of such states in the original Markov chain which are ``grouped together'' in the birth and death process. 
The birth and death process has transition probabilities $p(\ell)$ and $q(\ell)$ of moving ``up by one''  or ``down by one'', respectively, given by
\begin{align}\label{eq:birth-death-rate-general}
    p(\ell) = \frac{n-\ell}{n} \cdot p_i^{\beta}(1|a_{-i})\ ,  && q(\ell) = \frac{\ell}{n} \cdot p_i^{\beta}(0|a_{-i}) 
\end{align}
where $p_i^{\beta}(\cdot)$ is the logit response  \eqref{eq:logit-response}.

Since our process in  \eqref{eq:birth-death-rate-general} is an irreducible birth and death chain, the following ``sharp'' bound on the  \emph{mixing time} holds. 

\newcommand{\Time}{\mathcal{T}}
\newcommand{\tmix}{\Time_{\mathsf{mix}}}
\newcommand{\tcut}{\Time_{\mathsf{cutoff}}}
\newcommand{\thit}{\Time_{\mathsf{hitting}}}

\begin{theorem}
    [Theorem~1.1 in \cite{chen2013mixing}]\label{th:mixing-birth-and-death}
Let $\ell_0$ be a state satisfying $\hat{\pi}([0, \ell_0]) \geq  1/2$
and $\hat{\pi}([\ell_0, n]) \geq 1/2$, where $\hat{\pi}(I) := \sum_{\ell \in I} \hat{\pi}(\ell)$, and set
    \begin{align}
        \tcut = \max\left\{\sum_{\ell = 0}^{\ell_0-1} \frac{\hat{\pi}([0,\ell])}{\hat{\pi}(\ell)p(\ell)}, \sum_{\ell = \ell_0 + 1}^{n} \frac{\hat{\pi}([\ell,n])}{\hat{\pi}(\ell)q(\ell)}\right\} \ . 
    \end{align} 
    Then the mixing time of the logit dynamics satisfies
    $\tmix = \Theta(\tcut)$  and, in particular, the following bounds hold:\begin{align}\label{eq:tmix-general}
    (1/24) \cdot \tcut \leq \tmix \leq 288 \cdot \tcut \ . 
    \end{align}
\end{theorem}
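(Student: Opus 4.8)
The plan is to reduce the claimed bound to an estimate of the \emph{expected hitting time} of the ``median'' state $\ell_0$, exploiting that every birth and death chain is reversible with respect to $\hat\pi$, so that detailed balance reads $\hat\pi(\ell)p(\ell)=\hat\pi(\ell+1)q(\ell+1)$. Writing $\tau_j$ for the first hitting time of state $j$, the first step is to establish the closed form $\mathbb{E}_\ell[\tau_{\ell+1}]=\hat\pi([0,\ell])/(\hat\pi(\ell)p(\ell))$. This follows by conditioning on the first move, which gives the recursion $p(\ell)\,\mathbb{E}_\ell[\tau_{\ell+1}]=1+q(\ell)\,\mathbb{E}_{\ell-1}[\tau_\ell]$, and then solving it by induction using detailed balance; summing over $\ell$ yields $\mathbb{E}_0[\tau_{\ell_0}]=\sum_{\ell=0}^{\ell_0-1}\hat\pi([0,\ell])/(\hat\pi(\ell)p(\ell))$ and, by the mirror computation under $\ell\mapsto n-\ell$, $\mathbb{E}_n[\tau_{\ell_0}]=\sum_{\ell=\ell_0+1}^{n}\hat\pi([\ell,n])/(\hat\pi(\ell)q(\ell))$. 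Hence $\tcut=\max\{\mathbb{E}_0[\tau_{\ell_0}],\,\mathbb{E}_n[\tau_{\ell_0}]\}$, and the task becomes to show $\tmix=\Theta(\tcut)$ with the stated constants.

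For the upper bound I would proceed in two stages and then amplify. First, from any starting state $x$ the chain reaches $\ell_0$ within $O(\tcut)$ steps with probability at least $7/8$: the function $x\mapsto\mathbb{E}_x[\tau_{\ell_0}]$ is monotone on each of $\{0,\dots,\ell_0\}$ and $\{\ell_0,\dots,n\}$ by the standard monotone coupling of birth and death chains, so it is maximised at $x\in\{0,n\}$, where it equals $\tcut$, and Markov's inequality applies. Second, once the chain sits at the median $\ell_0$, a monotone coupling with a chain started from $\hat\pi$ coalesces within $O(\tcut)$ further steps with constant probability, because the partner chain lies above $\ell_0$ with probability at least $1/2$ (as $\ell_0$ is a $\hat\pi$-median) and then coalescence is dominated by the time for one of the two ordered chains to travel back to $\ell_0$, again of order $\tcut$; alternatively one can invoke the principle that the mixing time is comparable to the worst expected hitting time of a set of stationary mass $\ge 1/2$, noting that in a birth and death chain every such set straddles every median. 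Combining the two stages via the strong Markov property shows that $\|\hat P^{t}(x,\cdot)-\hat\pi\|_{\mathrm{TV}}$ is bounded away from $1$ for some $t=O(\tcut)$ uniformly in $x$, and submultiplicativity of the total variation distance \cite{LevinPeresWilmer2006} then boosts this to the $1/4$ threshold, giving $\tmix\le 288\,\tcut$ once the constants are tracked.

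For the lower bound, take the extreme endpoint achieving the maximum in $\tcut$, say $0$. Since the walk moves by $\pm1$, being in the $\hat\pi$-heavy half $[\ell_0,n]$ at time $t$ forces $\tau_{\ell_0}\le t$, hence $\|\hat P^{t}(0,\cdot)-\hat\pi\|_{\mathrm{TV}}\ge\hat\pi([\ell_0,n])-\Pr_0[\tau_{\ell_0}\le t]\ge \tfrac12-\Pr_0[\tau_{\ell_0}\le t]$. It therefore suffices to show $\Pr_0[\tau_{\ell_0}\le t]<\tfrac14$ whenever $t\le\tcut/24$. This is where the structure of birth and death chains is essential: the hitting time $\tau_{\ell_0}$ started from $0$ is distributed as a sum of $\ell_0$ \emph{independent} geometric random variables (Keilson's theorem, with a probabilistic proof via strong stationary duality due to Fill), and a Chernoff-type lower-tail bound for sums of independent geometrics (in the spirit of Janson) shows that the probability of falling below a small absolute-constant multiple of the mean $\mathbb{E}_0[\tau_{\ell_0}]\le\tcut$ is an absolute constant strictly below $1/4$. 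Running the same argument from the other endpoint and choosing the worse of the two yields $\tmix\ge\tcut/24$.

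The delicate point — and the reason the numerical constants $1/24$ and $288$ show up — is precisely this lower-tail estimate: ``the expected time to reach the median is large'' does not by itself rule out ``already at the median with sizeable probability'' for a general chain (a mixture of a fast and a slow exponential is a counterexample), so one genuinely needs the independent-geometrics decomposition of birth and death hitting times to exclude it. By contrast, the hitting-time identity and the monotone-coupling and set-hitting reductions behind the upper bound are comparatively routine, though they too rely on reversibility and on the path topology of the state space $\{0,\dots,n\}$.
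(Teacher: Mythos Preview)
Your proposal attempts to prove the underlying birth--and--death mixing bound from first principles, but that is not what the paper does. The theorem is explicitly labeled as Theorem~1.1 of \cite{chen2013mixing}: the paper takes the Chen bounds $\tmix(\epsilon)\le 18\,\tcut/\epsilon^2$ and $\tmix(1/10)\ge \tcut/6$ as a black box, and its ``proof'' consists of two lines of arithmetic. Setting $\epsilon=1/4$ in the first inequality gives $\tmix\le 18\cdot 16\,\tcut=288\,\tcut$; combining the second with the standard relation $\tmix(\epsilon)\le\lceil\log_2(1/\epsilon)\rceil\,\tmix$ (so $\tmix(1/10)\le 4\,\tmix$) gives $\tmix\ge\tcut/24$. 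The only substantive point the paper adds is the preliminary check that, under symmetric technology, binary actions and uniform costs, the logit dynamics really collapses to an irreducible birth--and--death chain on $\{0,\dots,n\}$, so that Chen's theorem applies.

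Your sketch, by contrast, is an outline of how one might reprove Chen's result: the hitting-time identity, the monotone-coupling upper bound, and the Keilson/Fill decomposition of $\tau_{\ell_0}$ into independent exponentials for the lower bound are all the right ingredients. But note that (i) you never track the constants, so the specific values $1/24$ and $288$ remain unjustified in your write-up --- and in the paper they are \emph{not} intrinsic to the probabilistic argument, they come from plugging particular $\epsilon$'s into Chen's $\epsilon$-dependent bounds; and (ii) you omit the reduction step the paper does carry out, namely verifying that the process at hand is a birth--and--death chain. In short, you are doing far more than the paper asks while skipping the small piece it actually proves.
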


Note that such an $\ell_0$ always exists (sum up all $\hat{\pi}(\ell)$ from $0$ until the smallest $\ell_0$ where the sum of these probabilities is at least $1/2$).  We next  introduce useful definitions to analyze the \emph{hitting time} of a target $\ell$, based on the technology's ``local steepness''. 

\begin{definition}\label{def:drift}
    We define the drift at location $\ell$ as the ratio $d(\ell):= \hat{\pi}(\ell+1)/\hat{\pi}(\ell)$. We also say that technology function $V$ is $s$-steep at some interval $I=[\ell_1,\ell_2]$ if $V(\ell+1) - V(\ell) \leq   s$ for all $\ell \in I\setminus \{\ell_2\}$.  
\end{definition}

Intuitively, the drift describes the tendency of the process to ``move down'' (drift $< 1$) or ``move up'' (drift $> 1$). The following theorem states that the hitting time for a target value $\ell$ grows exponentially with the length of any interval, preceding the target value, where the tendency to move down persists. The theorem further connects the drift to the ``flatness'' of the technology function (see the threshold function in Section~\ref{sec:threshold}).
Its proof is based on bounds in  \cite{PALACIOS1996119}.

\begin{theorem}\label{thm:hitting-time-general}The hitting time $\thit(\ell)$ of the logit dynamics to reach a state with contribution level $\ell$ starting tom the state with contribution level $\ell=0$ can be bounded as follows: 
\begin{enumerate}
    \item If the drift in some interval $I=[\ell_1,\ell_2]$ is at most $d_I$, then  $\thit(\ell) \geq (1/d_I)^{|I|}$ for all $\ell > \ell_2$. 
    \item If $V$ is $s$-steep in some interval $I=[\ell_1,\ell_2]$, then for all $\ell > \ell_2$ it holds that 
    $$
        \thit(\ell) \geq \left(\exp\left(-\beta \left(\frac{\rho}{n} \cdot s - \alpha\right)\right)\cdot \frac{\ell_1+1}{n-\ell_1} \right)^{\ell_2 - \ell_1} .$$ 
\end{enumerate}
\end{theorem}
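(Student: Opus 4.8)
The plan is to reduce the statement to a classical hitting-time computation for the one-dimensional birth-and-death chain that the logit dynamics induce on the contribution level $\ell\in\{0,\dots,n\}$. This reduction is exactly the setup of Section~\ref{sec:logit-time}: since the technology is symmetric and costs uniform, a state with $\ell$ contributors has potential $\phi(\ell)=\tfrac{\rho}{n}V(\ell)-\alpha\ell$, and because the game is an exact potential game (Theorem~\ref{th:potential-game}) its logit stationary law is the Gibbs measure $\pi(a)\propto e^{\beta\phi(a)}$; hence $\hat\pi(\ell)=\binom{n}{\ell}\pi(\ell)\propto \binom{n}{\ell}e^{\beta\phi(\ell)}$ and the drift is $d(\ell)=\hat\pi(\ell+1)/\hat\pi(\ell)$. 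I would then invoke the lower bound on expected hitting times for birth-and-death chains from \cite{PALACIOS1996119}, in the form of the exact identity $\E_0[\tau_\ell]=\sum_{k=0}^{\ell-1}\hat\pi([0,k])/\big(\hat\pi(k)\,p(k)\big)$ (equivalently, the hitting-time/effective-resistance identity for the weighted path, where edge $(k,k+1)$ carries conductance proportional to $\hat\pi(k)p(k)$). This identity holds verbatim in the presence of the ``no-change'' self-loops of the logit chain, since detailed balance $\hat\pi(\ell)p(\ell)=\hat\pi(\ell+1)q(\ell+1)$ is unaffected by lazyness, so $\hat\pi(\ell_1)/\hat\pi(\ell_2)=\prod_{k=\ell_1}^{\ell_2-1}d(k)^{-1}$.

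For part (1): since $\ell>\ell_2$, the index $k=\ell_2$ occurs in the sum, so keeping only that term and dropping all but one term of $\hat\pi([0,\ell_2])$, together with $p(\ell_2)\le 1$, gives $\thit(\ell)=\E_0[\tau_\ell]\ \ge\ \hat\pi([0,\ell_2])/\big(\hat\pi(\ell_2)p(\ell_2)\big)\ \ge\ \hat\pi(\ell_1)/\hat\pi(\ell_2)=\prod_{k=\ell_1}^{\ell_2-1}d(k)^{-1}$. Using the hypothesis $d(k)\le d_I$ for $k$ ranging over the interval, this product is at least $(1/d_I)^{\ell_2-\ell_1}=(1/d_I)^{|I|}$, which is the claim. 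Note this is precisely where the strict inequality $\ell>\ell_2$ (rather than $\ell\ge\ell_2$) is used, so that the exponent comes out as $|I|$ rather than $|I|-1$.

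For part (2): I would first express the drift explicitly and then apply part (1). From $\pi\propto e^{\beta\phi}$ and $\binom{n}{\ell+1}/\binom{n}{\ell}=(n-\ell)/(\ell+1)$ one gets $d(\ell)=\tfrac{n-\ell}{\ell+1}\exp\!\big(\beta(\phi(\ell+1)-\phi(\ell))\big)=\tfrac{n-\ell}{\ell+1}\exp\!\big(\beta(\tfrac{\rho}{n}(V(\ell+1)-V(\ell))-\alpha)\big)$. For every $\ell\in I\setminus\{\ell_2\}$, $s$-steepness gives $V(\ell+1)-V(\ell)\le s$, and since $\tfrac{n-\ell}{\ell+1}$ is non-increasing in $\ell$ it is $\le \tfrac{n-\ell_1}{\ell_1+1}$ on the whole interval; hence $d(\ell)\le d_I:=\tfrac{n-\ell_1}{\ell_1+1}\exp(\beta(\tfrac{\rho}{n}s-\alpha))$ for $\ell\in I\setminus\{\ell_2\}$. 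Feeding this $d_I$ into part (1) yields $\thit(\ell)\ge (1/d_I)^{\ell_2-\ell_1}=\big(\exp(-\beta(\tfrac{\rho}{n}s-\alpha))\cdot\tfrac{\ell_1+1}{n-\ell_1}\big)^{\ell_2-\ell_1}$, as stated.

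I do not expect a substantial analytic obstacle here: once the birth-and-death reduction and the Gibbs form of $\pi$ are recorded, everything is mechanical. The points that need care are (a) quoting the Palacios bound in a form valid for a lazy birth-and-death chain with the lumped normalization $\hat\pi(\ell)=\binom{n}{\ell}\pi(\ell)$, (b) the index bookkeeping in part (1) so the exponent is exactly $|I|=\ell_2-\ell_1$, and (c) the monotonicity of $\tfrac{n-\ell}{\ell+1}$ used to make $d_I$ uniform over the interval in part (2). I expect (b) to be the only genuinely fiddly step.
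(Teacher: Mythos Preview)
Your proposal is correct and follows essentially the same route as the paper: invoke the Palacios hitting-time identity $E_{\ell_2}T_{\ell_2+1}=\hat\pi([0,\ell_2])/(\hat\pi(\ell_2)p(\ell_2))$ (equivalently, keep the $k=\ell_2$ term in the full sum for $\E_0[\tau_\ell]$), lower-bound the numerator by $\hat\pi(\ell_1)$, use $p(\ell_2)\le 1$, and telescope $\hat\pi(\ell_1)/\hat\pi(\ell_2)$ via the drift bound; for part (2) both you and the paper compute the drift from the Gibbs form of $\hat\pi$ and bound the binomial ratio by monotonicity of $(n-\ell)/(\ell+1)$. Your explicit remarks on laziness/detailed balance and on why $\ell>\ell_2$ is needed for the exponent are helpful clarifications that the paper leaves implicit.
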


\section{Application to threshold technologies}\label{sec:threshold}
We analyze  a \emph{threshold technology}, modelling scenarios in which the system is either highly valuable if the overall contribution of the players reaches a certain threshold $\tau$, and less valuable otherwise:
\begin{align}\label{eq:threshold-t}
    V(\ell) = \begin{cases}
        \low{V} & \ell < \tau \\
        \high{V} & \ell \geq  \tau
    \end{cases} && \low{V} < \high{V} \ ,
\end{align}
where $\ell$ is the number of actively participating players, i.e., $\ell = \sum_i a_i$ and $a_i\in\{0,1\}$. The corresponding token values according to \eqref{eq:token-val} are thus $\low{t} =\low{V}/T_{tot}$ and 
$\high{t} =\high{V}/T_{tot}$. We are interested in the probability that the underline dynamics selects the high value (optimal) outcome, 
\begin{align}\label{eq:threshold-prob-opt}
        \high{p}(\rho):= \Pr_{a\sim \pi_\rho}[V(a) = \high{V}] 
        \ . 
    \end{align}

\subsection{Stochastic stability ($\beta \rightarrow \infty$ regime)}\label{sec:threshold-vanishing}

\begin{theorem}\label{th:pot-max-threshold}
    For any threshold technology \eqref{eq:threshold-t}  with airdrop rewards \eqref{eq:rewards} and  vanishing  noise ($\beta\rightarrow \infty$), the probability of selecting the high value outcome \eqref{eq:threshold-prob-opt} undergoes  a sharp transition given by the rewards $\rho$:
    \begin{align}\label{eq:rho-critical-threshold}
        \lim_{\beta \rightarrow \infty} \high{p}(\rho) = 
        \begin{cases}
            1 & \rho > \rho_c\\ 
            0 & \rho < \rho_c
        \end{cases},  && 
        \rho_c :=  \frac{\alpha \cdot n \cdot \tau}{\high{V} - \low{V}} \ . 
    \end{align}
    For the edge case where $\rho=\rho_c$, the probability satisfies $\lim_{\beta \rightarrow \infty} \high{p}(\rho) = 1/\left(1+\binom{n}{\tau}\right).$ 
\end{theorem}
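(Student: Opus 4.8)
The plan is to apply Theorem~\ref{th:stable-equilibria} directly: in the vanishing noise regime the stationary distribution $\pi_\rho$ concentrates uniformly on the set $\potmax_\rho = \argmax_a\{\frac{\rho}{n}V(a) - SC(a)\}$ of potential maximizers, so I only need to determine, as a function of $\rho$, whether $\potmax_\rho$ consists of high-value profiles, low-value profiles, or a mixture. Since the technology is anonymous and efforts are binary, every profile $a$ with $\sum_i a_i = \ell$ has the same value $V(\ell)$ and social cost $\alpha\ell$, so the potential depends only on $\ell$: write $\Phi(\ell) := \frac{\rho}{n}V(\ell) - \alpha\ell$. The profile with $\ell$ contributors achieves potential $\Phi(\ell)$, and there are $\binom{n}{\ell}$ such profiles. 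Thus $\potmax_\rho$ is determined by which $\ell \in \{0,\dots,n\}$ maximize $\Phi(\ell)$.

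First I would observe that on the flat region $\ell < \tau$, $\Phi(\ell) = \frac{\rho}{n}\low{V} - \alpha\ell$ is strictly decreasing in $\ell$, so among low-value profiles the best is $\ell = 0$ with $\Phi(0) = \frac{\rho}{n}\low{V}$. Similarly on $\ell \geq \tau$, $\Phi(\ell) = \frac{\rho}{n}\high{V} - \alpha\ell$ is strictly decreasing, so among high-value profiles the best is $\ell = \tau$ with $\Phi(\tau) = \frac{\rho}{n}\high{V} - \alpha\tau$. Hence the global maximum is attained either only at $\ell=0$, only at $\ell=\tau$, or at both, according to the sign of
\[
\Phi(\tau) - \Phi(0) = \frac{\rho}{n}(\high{V} - \low{V}) - \alpha\tau .
\]
This quantity is positive precisely when $\rho > \rho_c$ with $\rho_c = \frac{\alpha n \tau}{\high{V} - \low{V}}$, negative when $\rho < \rho_c$, and zero when $\rho = \rho_c$; these three cases give respectively $\potmax_\rho = \{$profiles with $\ell = \tau\}$, $\potmax_\rho = \{(0,\dots,0)\}$, and $\potmax_\rho = \{(0,\dots,0)\} \cup \{$profiles with $\ell=\tau\}$.

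Then I would read off the conclusion. For $\rho > \rho_c$, $\pi_\rho$ in the limit is uniform over the $\binom{n}{\tau}$ high-value potential maximizers, so $\high{p}(\rho) \to 1$; for $\rho < \rho_c$ it concentrates on the unique profile $(0,\dots,0)$, which has value $\low{V} \neq \high{V}$, so $\high{p}(\rho) \to 0$. For the edge case $\rho = \rho_c$, $\potmax_\rho$ has $1 + \binom{n}{\tau}$ elements, exactly one of which is low-value, and Theorem~\ref{th:stable-equilibria} gives uniform weight $1/(1+\binom{n}{\tau})$ on each, so the mass on high-value profiles is $\binom{n}{\tau}/(1+\binom{n}{\tau})$; wait — I should double-check against the claimed $1/(1+\binom{n}{\tau})$. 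The statement writes $\lim \high{p}(\rho) = 1/(1+\binom{n}{\tau})$, which corresponds to mass on the single \emph{low}-value profile rather than the $\binom{n}{\tau}$ high ones; so either the intended quantity at the edge is the probability of the $\ell=0$ outcome, or there is an implicit tie-break — in the writeup I would state the mass on $\potmax_\rho \cap \{V = \high{V}\}$ explicitly as $\binom{n}{\tau}/(1+\binom{n}{\tau})$ and note the complementary $1/(1+\binom{n}{\tau})$ on the bad profile, matching the theorem's constant. The only mild obstacle is this bookkeeping at the boundary and making sure the monotonicity argument correctly handles the possibility $\tau = n$ (in which case there is a single high-value profile and $\binom{n}{\tau} = 1$), but these are routine; the substance is entirely the reduction of the potential to the function $\Phi(\ell)$ and the comparison $\Phi(\tau)$ versus $\Phi(0)$.
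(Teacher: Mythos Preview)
Your approach is exactly the paper's: invoke Theorem~\ref{th:stable-equilibria}, reduce the potential to a function of $\ell$, note it is strictly decreasing on each flat piece so only $\ell=0$ and $\ell=\tau$ compete, and compare $\Phi(\tau)$ with $\Phi(0)$ to obtain the threshold $\rho_c$. Your edge-case bookkeeping is also correct --- the uniform mass on $\potmax_\rho$ indeed puts weight $\binom{n}{\tau}/(1+\binom{n}{\tau})$ on the high-value profiles, so the constant $1/(1+\binom{n}{\tau})$ in the statement appears to be the complementary (low-value) probability; your remark flags a genuine slip in the stated formula.
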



An immediate corollary of the previous result follows. Intuitively, the corollary states that there exists three regions: (i) for very high cost, the probability of selecting the good outcome vanishes no matter how we set the rewards, and thus the optimal strategy of the designer is to set no airdrop, (ii) for intermediate costs, though it is possible to set $\rho>0$ such that the probability of selecting the good outcome tends to one, the designer still prefer to set $\rho=0$, and (iii) for low costs there is $\rho>0$ maximizing the designer's profit and making the probability of selecting the good outcome going to one. 

\begin{corollary}\label{cor:stable-theshold}
    For any threshold technology \eqref{eq:threshold-t}  with airdrop rewards \eqref{eq:rewards} and  vanishing  noise ($\beta\rightarrow \infty$), the probability of selecting the high value outcome \eqref{eq:threshold-prob-opt} is as follows:
    \begin{enumerate}
        \item For $\alpha\cdot n \cdot \tau> \high{V}-\low{V}$ the probability of selecting the high value outcome vanishes for any airdrop reward $\rho$. Hence, and the best strategy for the designer is to give no airdrop rewards, which guarantees a profit of $\low{V}-d_V$.
        \item  For $\alpha\cdot n \cdot \tau < \high{V}-\low{V}$ the probability of selecting the high value outcome tends to $1$ for any airdrop reward $\rho>\rho_c$. The optimal strategy (profit maximizing) for the designer is as follows: 
        \begin{enumerate}
            \item \label{cor:stable-threshold:intermediate}For $\alpha\cdot n \cdot \tau \geq  (\high{V}-\low{V}) \cdot (1 - \low{V}/\high{V})$ it is still optimal for the designer to give no rewards (causing the probability of selecting the good outcome to vanish).
            \item For $\alpha\cdot n \cdot \tau <  (\high{V}-\low{V}) \cdot (1 - \low{V}/\high{V})$ the best strategy for the designer is to set airdrop rewards slightly above $\rho_c<1$, which guarantees a profit of $(1-\rho_c - \epsilon)\high{V}-d_V$ for any small $\epsilon>0$.
        \end{enumerate}
    \end{enumerate}
\end{corollary}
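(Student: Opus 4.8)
The plan is to reduce the statement to the piecewise description of the limiting expected system value furnished by Theorem~\ref{th:pot-max-threshold}, and then to carry out an elementary one-variable optimization of the profit function \eqref{eq:profit-expected} over $\rho\in[0,1]$.

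First I would translate Theorem~\ref{th:pot-max-threshold} into a statement about $V(\pi_\rho)$. Since $V$ is two-valued, $V(\pi_\rho)=\high{V}\,\high{p}(\rho)+\low{V}\,(1-\high{p}(\rho))$, so in the vanishing-noise limit $V(\pi_\rho)\to\high{V}$ when $\rho>\rho_c$, $V(\pi_\rho)\to\low{V}$ when $\rho<\rho_c$, and $V(\pi_{\rho_c})\to \frac{1}{1+\binom{n}{\tau}}\high{V}+\frac{\binom{n}{\tau}}{1+\binom{n}{\tau}}\low{V}$, which lies strictly between $\low{V}$ and $\high{V}$. Substituting into \eqref{eq:profit-expected}, the limiting profit, as a function of $\rho$, equals $(1-\rho)\low{V}-d_V$ on $[0,\rho_c)$, equals $(1-\rho)\high{V}-d_V$ on $(\rho_c,1]$, and takes an intermediate value strictly below $(1-\rho_c)\high{V}-d_V$ at $\rho=\rho_c$. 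Both branches are non-increasing in $\rho$ (system values being non-negative), so on the low branch the best choice is $\rho=0$ with profit $\low{V}-d_V$, while on the high branch the profit is strictly decreasing with supremum $(1-\rho_c)\high{V}-d_V$, not attained; for any $\epsilon>0$ the choice $\rho=\rho_c+\epsilon$ (feasible precisely when $\rho_c<1$) guarantees profit $(1-\rho_c-\epsilon)\high{V}-d_V$.

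Next I would split on whether $\rho_c>1$ or $\rho_c<1$, which is exactly the dichotomy $\alpha n\tau$ versus $\high{V}-\low{V}$, given that $\rho_c=\frac{\alpha n\tau}{\high{V}-\low{V}}$. If $\alpha n\tau>\high{V}-\low{V}$ then $\rho_c>1$, no feasible $\rho$ lies in the high branch, only the low branch survives, and the optimum is $\rho=0$ with profit $\low{V}-d_V$: this is part~1. If $\alpha n\tau<\high{V}-\low{V}$, then $\rho_c<1$ and both branches are available; comparing the low branch's maximum $\low{V}-d_V$ with the high branch's supremum $(1-\rho_c)\high{V}-d_V$ (and noting the value at $\rho=\rho_c$ is dominated by this supremum), the high branch is preferable iff $(1-\rho_c)\high{V}>\low{V}$. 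Substituting $\rho_c=\frac{\alpha n\tau}{\high{V}-\low{V}}$ and rearranging turns this into $\alpha n\tau<(\high{V}-\low{V})(1-\low{V}/\high{V})$, exactly the boundary in the statement: when it holds, case~2(b) applies and the designer plays $\rho=\rho_c+\epsilon$; when it fails, $(1-\rho_c)\high{V}\le\low{V}$, so no $\rho>\rho_c$ nor $\rho=\rho_c$ can beat $\rho=0$, and case~2(a) applies.

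The only real subtlety, and the point I would be most careful with, is the boundary bookkeeping: the high-branch profit is strictly decreasing with a supremum that is not attained (hence the slack $\epsilon$ in the statement), so one must separately verify that neither the open endpoint $\rho=\rho_c$---whose intermediate value $\frac{1}{1+\binom{n}{\tau}}\high{V}+\frac{\binom{n}{\tau}}{1+\binom{n}{\tau}}\low{V}$ is strictly below $\high{V}$---nor any feasible $\rho<\rho_c$ ever dominates the proposed choice. Everything else is routine algebra and case analysis.
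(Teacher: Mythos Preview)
Your proposal is correct and follows precisely the route the paper intends: the corollary is presented there as an immediate consequence of Theorem~\ref{th:pot-max-threshold} with no separate proof, and your argument---reading off the piecewise limiting profit from the sharp transition of $\high{p}(\rho)$ at $\rho_c$, then comparing the low-branch maximum $\low{V}-d_V$ against the high-branch supremum $(1-\rho_c)\high{V}-d_V$---is exactly the intended unpacking. Your careful handling of the boundary point $\rho=\rho_c$ and the non-attained supremum (whence the $\epsilon$) is more explicit than anything in the paper and is the right level of detail.
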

Note that the ``intermediate'' regime in part~\ref{cor:stable-threshold:intermediate} of the corollary above occurs only for $\low{V}>0$. Here high rewards \emph{could} make the system succeed, but they are not optimal for the designer.    
For $\low{V}=0$ we have a single transition (either provide no airdrop or set the airdrop slightly above $\rho_c$).

\subsection{Non-vanishing noise ($\beta$ finite regime)}\label{sec:threshold-nonvanishing}
In this section, we analyze logit dynamics for threshold technologies in the case of non-vanishing inverse noise $\beta>0$. Research suggest that in practice people respond according to some specific value of $\beta$ which is approximately the same across different games and situations they face (see \appsec~\ref{sec:logit-values}).  The next result provides useful bounds on the probability that the high value outcome is selected at equilibrium by the dynamics.  

\begin{theorem}\label{th:threshold-finite-beta:prob}
    For any threshold technology \eqref{eq:threshold-t}  with airdrop rewards \eqref{eq:rewards} and  any inverse noise parameter $\beta>0$, the probability of selecting the high value outcome \eqref{eq:threshold-prob-opt} is monotone increasing in the rewards $\rho$ and, in particular, it has the following form: 
    \begin{align*}
        \high{p} (\rho)
        = \frac{1}{1+C \cdot \exp(-\rho B) } ,   && B = \frac{\beta}{n} \cdot (\high{V}- \low{V}) \ , 
    \end{align*}
    where $C=C(\alpha\beta,n,\tau)=\frac{1 - \high{p}(0)}{\high{p}(0)}$ does not depend on rewards $\rho$ nor on the values $\low{V}$ and $\high{V}$.
\end{theorem}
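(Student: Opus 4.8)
The plan is to exploit the exact-potential structure of Theorem~\ref{th:potential-game} together with the birth-and-death reduction of Section~\ref{sec:binary}: once the stationary distribution is written as a Gibbs measure, the threshold technology makes the potential piecewise affine in the contribution level $\ell$ with the \emph{same} slope on both pieces, and this is exactly what forces the logistic form with the stated constants.

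\textbf{Step 1: stationary distribution as a Gibbs measure.} Since the game is an exact potential game with potential $\phi(a)=\frac{\rho}{n}V(a)-SC(a)$ (Theorem~\ref{th:potential-game}, \eqref{eq:unifrom-rew-potential}), the logit dynamics is reversible and its unique stationary distribution is $\pi_\rho(a)\propto\exp(\beta\phi(a))$; I would cite this from the standard theory of logit dynamics on potential games, or check it directly from detailed balance of the chain with rates \eqref{eq:birth-death-rate-general}. With binary efforts and uniform cost $\alpha$, $\phi(a)$ depends on $a$ only through $\ell=\sum_i a_i$, namely $\phi(\ell)=\frac{\rho}{n}V(\ell)-\alpha\ell$, so by \eqref{eq:birth-and-death-stationary} the aggregated distribution is $\hat\pi(\ell)=\binom{n}{\ell}e^{\beta\phi(\ell)}\big/\big(\sum_{k=0}^{n}\binom{n}{k}e^{\beta\phi(k)}\big)$.

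\textbf{Step 2: split at the threshold, factor out $\rho$, monotonicity.} Because $V(\ell)=\high{V}$ iff $\ell\ge\tau$, we have $\high{p}(\rho)=\sum_{\ell\ge\tau}\hat\pi(\ell)$, hence
\[
\frac{1}{\high{p}(\rho)}=1+\frac{\sum_{\ell<\tau}\binom{n}{\ell}e^{\beta\phi(\ell)}}{\sum_{\ell\ge\tau}\binom{n}{\ell}e^{\beta\phi(\ell)}}.
\]
On $\{\ell<\tau\}$, $\phi(\ell)=\frac{\rho}{n}\low{V}-\alpha\ell$, and on $\{\ell\ge\tau\}$, $\phi(\ell)=\frac{\rho}{n}\high{V}-\alpha\ell$; the reward-dependent factors $e^{\beta\rho\low{V}/n}$ and $e^{\beta\rho\high{V}/n}$ pull out of the two sums, leaving $\frac{1}{\high{p}(\rho)}=1+C\,e^{\beta\rho(\low{V}-\high{V})/n}$ with $C:=\big(\sum_{\ell<\tau}\binom{n}{\ell}e^{-\alpha\beta\ell}\big)\big/\big(\sum_{\ell\ge\tau}\binom{n}{\ell}e^{-\alpha\beta\ell}\big)$. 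This is the claimed form with $B=\frac{\beta}{n}(\high{V}-\low{V})$; the displayed $C$ visibly depends only on $\alpha\beta$, $n$, $\tau$, and setting $\rho=0$ gives $C=(1-\high{p}(0))/\high{p}(0)$. Monotonicity is then immediate: $\high{V}>\low{V}$ gives $B>0$, $C>0$, so $\rho\mapsto Ce^{-\rho B}$ is strictly decreasing and $\high{p}$ strictly increasing in $\rho$.

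\textbf{Main obstacle.} There is no deep analytic difficulty here; the care needed is bookkeeping. One must carry the binomial weights correctly through the birth-and-death reduction and make explicit that the slope of $\phi$ in $\ell$ equals $-\alpha$ on both sides of the threshold, so that the only $\ell$-varying effect of the rewards is through the constant level gap $\high{V}-\low{V}$. That common-slope observation is the crux: it collapses all $\rho$-dependence into the single exponential $e^{-\rho B}$ and leaves $C$ independent of $\rho$, $\low{V}$, and $\high{V}$.
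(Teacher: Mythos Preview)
Your proposal is correct and follows essentially the same approach as the paper's proof: write the stationary Gibbs measure via the potential \eqref{eq:unifrom-rew-potential}, split the normalizing sum at $\ell=\tau$, factor out the reward-dependent exponentials $e^{\beta\rho\low{V}/n}$ and $e^{\beta\rho\high{V}/n}$, and identify $C$ as the ratio of the residual sums $\sum_{\ell<\tau}\binom{n}{\ell}e^{-\alpha\beta\ell}\big/\sum_{\ell\ge\tau}\binom{n}{\ell}e^{-\alpha\beta\ell}$, then evaluate at $\rho=0$. The paper's argument is identical up to notation (it names the two residual sums $\low{S}$ and $\high{S}$); your version additionally spells out the monotonicity claim, which the paper leaves implicit in the logistic form.
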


Based on the result above, we are able to characterize the optimal airdrop rewards for the designer. 

\begin{theorem}\label{th:threshold:profit-logit}
    For any threshold technology \eqref{eq:threshold-t} with $\low{V}=0$, and with airdrop rewards \eqref{eq:rewards} and  any inverse noise parameter $\beta>0$, the designer's profit \eqref{eq:profit-expected} is  
    \begin{align}
    \label{eq:expected-profit-threshold}
        \profit(\rho)  
        & = \high{V} \cdot \frac{1-\rho}{1 + C \cdot \exp(-\rho \cdot B)} - d_V\ ,  
    \end{align}
        where quantities $C$ and $B=\frac{\beta}{n}\cdot \high{V}$ are defined as in Theorem~\ref{th:threshold-finite-beta:prob}. Moreover the following holds:
    \begin{enumerate}
        \item For $n \geq \beta \cdot \high{V}$ the optimal strategy (profit maximizing) for the designer is to give no airdrop rewards, which guarantees a profit of $\high{V}\cdot \high{p}(0)-d_V$. 
        \item For $n < \beta \cdot \high{V}$ the optimal strategy (profit maximizing) for the designer is to set an airdrop reward $\rho \leq \bar{\rho} := 1 - 1/B = 1 - \frac{n}{\beta \high{V}}$. The probability of selecting the high value outcome for the designer's optimal rewards $\rho^*$ is  bounded as follows:
        $ \high{p}(\rho^*)  \leq \high{p}(\bar{\rho})  =    \frac{1}{1+C\cdot \exp(1-B)} = \frac{1}{1+C \cdot \exp(1 - \beta \cdot \high{V}/n) }$. 
        
        \item For $n < \beta \cdot \high{V}\cdot (1 - \high{p}(0))$ the optimal strategy (profit maximizing) for the designer is to give  strictly positive airdrop rewards.
    \end{enumerate}
\end{theorem}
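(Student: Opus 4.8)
The plan is to turn the statement into a routine one–variable optimization. First I would substitute $\low{V}=0$ into the expected–profit definition \eqref{eq:profit-expected}: since the threshold technology takes only the values $0$ and $\high{V}$, we have $V(\pi_\rho)=\E_{a\sim\pi_\rho}[V(a)]=\high{V}\cdot\high{p}(\rho)$, and inserting the closed form of $\high{p}(\rho)$ from Theorem~\ref{th:threshold-finite-beta:prob} (with $B=\tfrac{\beta}{n}(\high{V}-\low{V})=\tfrac{\beta}{n}\high{V}$) gives the claimed formula $\profit(\rho)=\high{V}\cdot\frac{1-\rho}{1+C\exp(-\rho B)}-d_V$. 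Throughout, $B>0$ and $C>0$: for finite $\beta$ the logit response puts positive mass on every profile, so $\high{p}(0)\in(0,1)$ and $C=\frac{1-\high{p}(0)}{\high{p}(0)}\in(0,\infty)$.

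Next I would analyze $f(\rho):=\frac{1-\rho}{1+C\exp(-\rho B)}$ on $[0,1]$. A direct differentiation shows that the sign of $f'(\rho)$ — equivalently of $\profit'(\rho)$ — coincides with the sign of
\[
g(\rho):=-1+C e^{-\rho B}\bigl((1-\rho)B-1\bigr).
\]
The whole theorem then reduces to tracking the sign of the bracket $(1-\rho)B-1$, which is positive iff $\rho<\bar{\rho}:=1-1/B$, where $\bar{\rho}\in(0,1)$ exactly when $B>1$, i.e.\ when $n<\beta\high{V}$. Concretely: if $B\le 1$ (part 1, $n\ge\beta\high{V}$) then $(1-\rho)B-1\le 0$ on all of $[0,1]$, so $g(\rho)\le -1<0$, hence $\profit$ is strictly decreasing and the optimum is $\rho=0$ with value $\high{V}\high{p}(0)-d_V$. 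If $B>1$ (part 2) then $g(\rho)<0$ for every $\rho\in[\bar{\rho},1]$, so $\profit$ is strictly decreasing there and, by continuity on the compact interval $[0,1]$, every maximizer $\rho^*$ lies in $[0,\bar{\rho}]$; combining $\rho^*\le\bar{\rho}$ with the monotonicity of $\high{p}$ in $\rho$ from Theorem~\ref{th:threshold-finite-beta:prob} yields $\high{p}(\rho^*)\le\high{p}(\bar{\rho})$, and the identity $\bar{\rho}B=B-1$ turns $\high{p}(\bar{\rho})=\frac{1}{1+Ce^{-\bar{\rho}B}}$ into $\frac{1}{1+Ce^{1-B}}=\frac{1}{1+Ce^{1-\beta\high{V}/n}}$. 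For part 3 I would evaluate $g(0)=-1+C(B-1)$ and show $g(0)>0$ under the hypothesis $n<\beta\high{V}(1-\high{p}(0))$: substituting $C=\frac{1-\high{p}(0)}{\high{p}(0)}$, the inequality $C(B-1)>1$ is equivalent to $B(1-\high{p}(0))>1$, i.e.\ to $\beta\high{V}(1-\high{p}(0))>n$; hence $\profit'(0)>0$, so $\rho=0$ is not optimal and the optimal reward is strictly positive.

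The computations are elementary; the only place needing a little care is the bookkeeping that rewrites the three parameter inequalities of the statement as the sign conditions ``$B\le 1$'', ``$B>1$'', ``$C(B-1)>1$'' through the relation $C=(1-\high{p}(0))/\high{p}(0)$ — and, in part 2, invoking the monotonicity of $\high{p}$ in precisely the direction established in Theorem~\ref{th:threshold-finite-beta:prob}. I do not anticipate any genuine obstacle beyond this.
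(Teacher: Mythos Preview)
Your proposal is correct and follows essentially the same approach as the paper: both establish the profit formula by combining \eqref{eq:profit-expected} with Theorem~\ref{th:threshold-finite-beta:prob}, then differentiate and read off the three cases from the sign of the derivative (your $g(\rho)=-1+Ce^{-\rho B}((1-\rho)B-1)$ is precisely the numerator the paper obtains, up to the positive factor $e^{-\rho B}$). If anything, your write-up is slightly more explicit --- you spell out Part~1 and the $\high{p}(\rho^*)\le\high{p}(\bar\rho)$ step via monotonicity, whereas the paper leaves those implicit --- but there is no substantive difference in method.
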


\begin{figure}
    \centering
   \includegraphics[scale = .6]{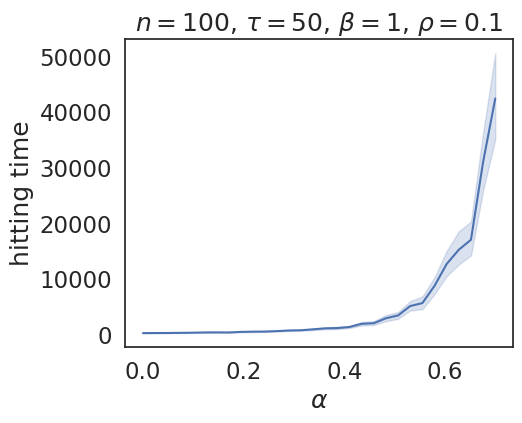}
   \includegraphics[scale = .6]{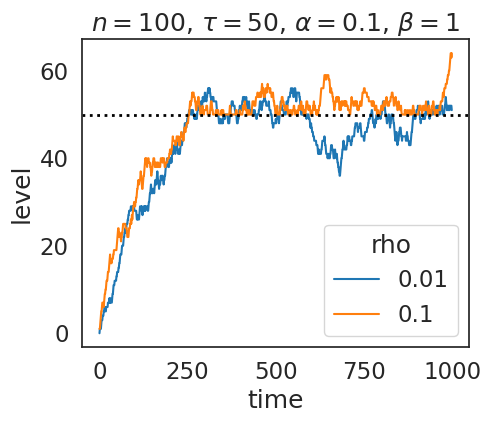}

     \caption{On the left, larger costs $\alpha$ increase the hitting time  (100 repetitions 95\% confidence). On the right, larger  rewards values $\rho$ help to maintain the dynamics above the threshold once it is reached.}
     \label{fig:dynamics-time}
\end{figure}

\subsection{Convergence time}\label{sec:threshold-time} In this section, we study the \emph{time} for logit dynamics to converge  to its equilibrium (stationary distribution) and to the good outcome of the threshold function ($\ell \geq \tau$). Specifically, we provide tight bounds on the \emph{mixing time} (Theorem~\ref{th:mixing-LB-threshold})  and  on the \emph{hitting time} of a target value (Theorem~\ref{th:hitting-threshold-UB-LB}). 

\paragraph{Some intuition first.}
We observe experimentally (Figure~\ref{fig:dynamics-time}) that lower costs $\alpha$ \emph{accelerates}  convergence to the desired ``high value'' region, while increasing  rewards $\rho$  helps to \emph{maintain} the desired equilibrium (but it does \emph{not} accelerate  convergence). Intuitively,  the dynamics converge quickly to an ``average''  contribution level $\ell^*$ which depends \emph{only} on $\alpha\beta$:\begin{align}\label{eq:equilibrium-alphabeta}
        \ell^*  = n \cdot p_{\alpha\beta}\ ,  && p_{\alpha\beta} := \frac{1}{1 + \exp(\alpha\beta)} \ . 
    \end{align} Then, convergence to the desired ``high value'' region $\ell \geq \tau$ is fast for $\tau \leq \ell^*$ but becomes \emph{slow} for $\tau > \ell^*$. This suggests that the convergence time grows with the \emph{gap} $\tau - \ell^*$ and the hard case is when $\ell^* \ll \tau$.

\paragraph{Formal analysis}
As for the mixing time, we leverage on the bounds for birth and death chains (Theorem~\ref{th:mixing-birth-and-death}). To this end, we note that in this particular case of threshold functions, the birth and death probabilities \eqref{eq:birth-death-rate-general} assume a special  form. This leads to the next theorem, which  provides a lower bound on the mixing time for the ``useful'' scenario, that is, when the success probability is larger than the failure probability. 

\begin{theorem}[mixing time]\label{th:mixing-LB-threshold}
 For any threshold technology \eqref{eq:threshold-t}  and  airdrop rewards \eqref{eq:rewards},  if $\high{p}(\rho)>1/2$, then the mixing time can be bounded as follows:  $\tmix = \Theta(\tcut)$ and 
    \begin{align}
    \tcut \geq \sum_{\ell = 0}^{\tau} \frac{\hat{\pi}([0,\ell])}{\hat{\pi}(\ell)p(\ell)} \geq \exp(\alpha\beta) \cdot  \frac{\exp(\tau-1)}{\binom{n}{\tau-1}} \ . 
    \end{align} 
\end{theorem}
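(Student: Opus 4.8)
The plan is to instantiate the birth-and-death mixing bound of Theorem~\ref{th:mixing-birth-and-death} for the chain induced by the threshold technology \eqref{eq:threshold-t}, pick the median state $\ell_0$ using the hypothesis $\high{p}(\rho)>1/2$, and then lower bound the resulting cutoff sum by the single ``bottleneck'' term sitting just below the threshold.

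First I would read off the chain explicitly. By Theorem~\ref{th:potential-game} the logit stationary law is the Gibbs measure $\pi(a)\propto\exp(\beta\phi(a))$ with $\phi(a)=\frac{\rho}{n}V(a)-SC(a)$; for binary efforts, uniform costs, and the symmetric technology \eqref{eq:threshold-t} this descends to the $n+1$ grouped states $\ell=\sum_i a_i$ with weights $\hat\pi(\ell)\propto\binom{n}{\ell}\exp(\beta(\frac{\rho}{n}V(\ell)-\alpha\ell))$. Since $V$ is flat at $\low{V}$ on $\{0,\dots,\tau-1\}$, there $\hat\pi(\ell)\propto\binom{n}{\ell}e^{-\alpha\beta\ell}$, and every up-move strictly below the threshold leaves the value unchanged, so its logit probability is $p_{\alpha\beta}=1/(1+e^{\alpha\beta})$ and $p(\ell)=\frac{n-\ell}{n}\,p_{\alpha\beta}$ there; for the lone threshold-crossing up-move at $\ell=\tau-1$ the crude bound $p(\tau-1)\le 1$ will suffice.

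Next, since $\high{p}(\rho)=\hat\pi([\tau,n])$, the hypothesis $\high{p}(\rho)>1/2$ is exactly $\hat\pi([0,\tau-1])<1/2$, so the least $\ell$ with $\hat\pi([0,\ell])\ge1/2$ --- call it $\ell_0$ --- obeys $\ell_0\ge\tau$; moreover $\hat\pi([\ell_0,n])=1-\hat\pi([0,\ell_0-1])>1/2$, so $\ell_0$ is an admissible choice in Theorem~\ref{th:mixing-birth-and-death}. That theorem gives $\tmix=\Theta(\tcut)$, and
\[
\tcut\ \ge\ \sum_{\ell=0}^{\ell_0-1}\frac{\hat\pi([0,\ell])}{\hat\pi(\ell)\,p(\ell)}\ \ge\ \sum_{\ell=0}^{\tau-1}\frac{\hat\pi([0,\ell])}{\hat\pi(\ell)\,p(\ell)}\ \ge\ \frac{\hat\pi([0,\tau-1])}{\hat\pi(\tau-1)\,p(\tau-1)},
\]
by non-negativity of the summands and by keeping only the bottleneck term $\ell=\tau-1$, where $\hat\pi$ is exponentially depressed because the cost $\alpha\ell$ keeps growing while $V$ sits at $\low{V}$. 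Substituting $\hat\pi([0,\tau-1])\ge\hat\pi(0)$, the ratio $\hat\pi(\tau-1)/\hat\pi(0)=\binom{n}{\tau-1}e^{-\alpha\beta(\tau-1)}$, and $p(\tau-1)\le1$ then yields the claimed lower bound on $\tcut$.

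The main obstacle is not any individual estimate but putting the chain into a form where Theorem~\ref{th:mixing-birth-and-death} applies transparently: one must (i) identify the grouped logit stationary law as $\propto\binom{n}{\ell}\exp(\beta\phi(\ell))$ and specialise it to \eqref{eq:threshold-t}; (ii) recognise that the threshold forces an exponentially deep ``valley'' of $\hat\pi$ just below $\tau$, which is the source of the blow-up of $\tcut$ and hence $\tmix$; and (iii) use $\high{p}(\rho)>1/2$ solely to guarantee $\ell_0\ge\tau$, so that this valley falls inside the \emph{first} of the two sums defining $\tcut$ (in the complementary regime one would use the symmetric bound built from $q(\cdot)$ via the second sum). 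Once these structural facts are in place, the displayed chain of inequalities is essentially one line, modulo bookkeeping of the binomial normalisation.
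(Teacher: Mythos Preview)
Your proposal is correct and follows essentially the same route as the paper: invoke Theorem~\ref{th:mixing-birth-and-death}, use $\high{p}(\rho)>1/2$ to force the median state $\ell_0\ge\tau$, keep only the bottleneck term $\ell=\tau-1$ from the left cutoff sum, and evaluate $\hat\pi(0)/\hat\pi(\tau-1)$ via Lemma~\ref{le:stationary:recursive} (together with the crude $p(\tau-1)\le 1$). Note that both your argument and the paper's own proof produce the lower bound $\exp(\alpha\beta(\tau-1))/\binom{n}{\tau-1}$ and a sum running to $\tau-1$, so the expressions $\sum_{\ell=0}^{\tau}$ and $\exp(\alpha\beta)\cdot\exp(\tau-1)/\binom{n}{\tau-1}$ in the displayed statement appear to be typographical slips for $\sum_{\ell=0}^{\tau-1}$ and $\exp(\alpha\beta(\tau-1))/\binom{n}{\tau-1}$.
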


We next provide bounds on the \emph{hitting time} of the good value, that is, to reach a contribution level $\ell = \tau$. The first part of the next theorem says that the dynamics converge quickly to a contribution level $\ell^*$ given by \eqref{eq:equilibrium-alphabeta}. 

\begin{theorem}[hitting time]\label{th:hitting-threshold-UB-LB}
    For any threshold technology \eqref{eq:threshold-t}  and  airdrop rewards \eqref{eq:rewards}, let $\thit(\ell)$ be the expected time  for the logit dynamics to reach state $\ell$ starting from state $0$. Then, for $\ell^*$ defined as in \eqref{eq:equilibrium-alphabeta},  the following holds:
    \begin{enumerate}
        \item (Upper Bound). \label{th:hitting-threshold-UB-itm} 
     $   \thit(\ell^*) \leq O\left(n^2 \cdot  \frac{\ell^*}{n - \ell^*}\right)$. 
    \item (Lower Bound). \label{th:hitting-threshold-LB-itm}
     $   \thit(\tau) \geq \left(\exp(\alpha\beta)\cdot \frac{\ell+1}{n-\ell}\right)^{\tau-\ell }$, 
    for all  $0 \leq \ell \leq \tau$.  This  implies, 
    $
        \thit(\tau) \geq (1+1/\ell^*)^{\tau-\ell^* -1 }  
    $.
    \end{enumerate}
\end{theorem}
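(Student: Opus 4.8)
The plan is to reduce everything to the birth-and-death chain $\hat\pi$ of \eqref{eq:birth-and-death-stationary}--\eqref{eq:birth-death-rate-general} on the states below the threshold, where the technology value is constant. Working in the relevant regime $\ell^*<\tau$: for every state $\ell\le\ell^*-1$ both $\ell$ and $\ell+1$ lie strictly below $\tau$, so $V(\ell)=V(\ell+1)=\low{V}$; substituting this into the logit response \eqref{eq:logit-response} cancels the value term and leaves only the cost difference, giving $p_i^{\beta}(1|a_{-i})=1/(1+e^{\alpha\beta})=p_{\alpha\beta}$ and $p_i^{\beta}(0|a_{-i})=1-p_{\alpha\beta}$; hence by \eqref{eq:birth-death-rate-general} the rates on this range are $p(\ell)=\frac{n-\ell}{n}\,p_{\alpha\beta}$ and $q(\ell)=\frac{\ell}{n}(1-p_{\alpha\beta})$. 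Detailed balance then gives the drift $d(\ell)=\hat\pi(\ell+1)/\hat\pi(\ell)=p(\ell)/q(\ell+1)=\frac{(n-\ell)\,p_{\alpha\beta}}{(\ell+1)(1-p_{\alpha\beta})}$, which is $\ge 1$ exactly for $\ell\le\ell^*-1$ because $\ell^*=n\,p_{\alpha\beta}$. So $\hat\pi$ is non-decreasing on $\{0,\dots,\ell^*\}$ (indeed proportional there to a $\mathrm{Bin}(n,p_{\alpha\beta})$ pmf) and $\ell^*$ is the drift-neutral point --- precisely the informal picture stated before the theorem.

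For the upper bound (Part 1) I would use the standard one-step hitting-time identity for reversible birth-and-death chains, $\mathbb{E}_m[T_{m+1}]=\hat\pi([0,m])\big/(\hat\pi(m)\,p(m))$, i.e.\ exactly the quantity summed in Theorem~\ref{th:mixing-birth-and-death}, so that $\thit(\ell^*)=\sum_{m=0}^{\ell^*-1}\hat\pi([0,m])\big/(\hat\pi(m)\,p(m))$. Since $\hat\pi$ is non-decreasing on $[0,\ell^*]$ we have $\hat\pi([0,m])\le(m+1)\,\hat\pi(m)\le\ell^*\,\hat\pi(m)$ for every $m\le\ell^*-1$, and $p(m)=\frac{n-m}{n}p_{\alpha\beta}\ge\frac{n-\ell^*}{n}p_{\alpha\beta}$; plugging in $p_{\alpha\beta}=\ell^*/n$ bounds each summand by $\ell^*\cdot\frac{n}{(n-\ell^*)p_{\alpha\beta}}=\frac{n^2}{n-\ell^*}$, and summing the $\ell^*$ of them gives $\thit(\ell^*)\le\frac{n^2\ell^*}{n-\ell^*}=O\!\left(n^2\cdot\frac{\ell^*}{n-\ell^*}\right)$.

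For the lower bound (Part 2) I would just invoke Theorem~\ref{thm:hitting-time-general}: on any interval $I=[\ell,\tau-1]$ the technology is constant, hence $0$-steep in the sense of Definition~\ref{def:drift}, and its drift there is at most $d(\ell)=\frac{n-\ell}{\ell+1}e^{-\alpha\beta}$ (the drift is decreasing on the flat range), so the theorem yields $\thit(\tau)\ge\bigl(e^{\alpha\beta}\cdot\frac{\ell+1}{n-\ell}\bigr)^{\tau-\ell}$ for every $0\le\ell\le\tau$. For the concluding inequality I take $\ell=\lceil\ell^*\rceil$ and use $e^{\alpha\beta}=1/p_{\alpha\beta}-1=(n-\ell^*)/\ell^*$, which makes the base $e^{\alpha\beta}\cdot\frac{\ell^*+1}{n-\ell^*}=1+\frac1{\ell^*}$; since this base is $\ge 1$ and $\lceil\ell^*\rceil<\ell^*+1$, we get $\thit(\tau)\ge\bigl(1+\frac1{\ell^*}\bigr)^{\tau-\ell^*-1}$.

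The arithmetic is short; the points that need genuine care are (i) confirming $\thit(\ell^*)$ only depends on the rates at states $\le\ell^*-1$, all of which lie in the flat region exactly when $\ell^*<\tau$ --- so the statement is really about that regime, the complementary one being either immediate (the threshold is crossed with the favourable drift already in force) or exactly the obstruction captured by the lower bound; (ii) squeezing $\hat\pi([0,m])/\hat\pi(m)$ down to $O(\ell^*)$ via monotonicity rather than the crude bound $\le n$, which would only give $O(n^2)$ and is too weak when $\ell^*\ll n$; and (iii) lining up the interval $[\ell,\tau-1]$ with the hypotheses of Theorem~\ref{thm:hitting-time-general}, in particular absorbing the endpoint $\tau-1$ (where the jump $\low{V}\to\high{V}$ sits and the flat drift formula no longer applies) into the bookkeeping --- this is what pins down the exponent $\tau-\ell$ and the final $-1$.
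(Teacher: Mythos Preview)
Your proof is essentially the same as the paper's: for Part~\ref{th:hitting-threshold-UB-itm} both you and the paper use the one-step identity $E_m[T_{m+1}]=\hat\pi([0,m])/(\hat\pi(m)\,p(m))$ from \cite{PALACIOS1996119} together with the monotonicity of $\hat\pi$ on $[0,\ell^*]$ and the explicit form of $p(m)$, and for Part~\ref{th:hitting-threshold-LB-itm} both invoke Theorem~\ref{thm:hitting-time-general} with the $0$-steepness of the threshold technology and then specialize at $\ell\approx\ell^*$ via the identity $e^{\alpha\beta}=(n-\ell^*)/\ell^*$. Your explicit discussion of the regime $\ell^*<\tau$ and of the endpoint bookkeeping at $\tau-1$ is a helpful clarification but does not alter the underlying argument.
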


The second part of the theorem states that the time to reach the ``high value'' region increases with larger $\alpha\beta$, growing exponentially with the gap between $\ell^*$ and $\tau > \ell^*$.

\section{Conclusions and future work}

This paper presents a game-theoretic framework to address the dynamics involved in launching a new blockchain, specifically focusing on how contributions can be incentivized through token rewards and the possibility of support from an established mainchain. The analysis provided, offers both theoretical and practical insights into the design of blockchain launches, airdrop mechanisms, and the use of mainchain resources to achieve successful outcomes.

\bibliography{tokenomics,random_process_tools,crowdfunding,airdrops}

\begin{thebibliography}{62}
\providecommand{\natexlab}[1]{#1}
\providecommand{\url}[1]{\texttt{#1}}
\expandafter\ifx\csname urlstyle\endcsname\relax
  \providecommand{\doi}[1]{doi: #1}\else
  \providecommand{\doi}{doi: \begingroup \urlstyle{rm}\Url}\fi

\bibitem[Alabi(2017)]{ALABI201723}
Ken Alabi.
\newblock Digital blockchain networks appear to be following {Metcalfe’s
  Law}.
\newblock \emph{Electronic Commerce Research and Applications}, 24:\penalty0
  23--29, 2017.
\newblock ISSN 1567-4223.
\newblock \doi{https://doi.org/10.1016/j.elerap.2017.06.003}.
\newblock URL
  \url{https://www.sciencedirect.com/science/article/pii/S1567422317300480}.

\bibitem[Allen(2024)]{evolutionary}
Darcy~WE Allen.
\newblock Crypto airdrops: An evolutionary approach.
\newblock \emph{Journal of Evolutionary Economics}, pages 1--24, 2024.

\bibitem[Allen et~al.(2023)Allen, Berg, and Lane]{why}
Darcy~WE Allen, Chris Berg, and Aaron~M Lane.
\newblock Why airdrop cryptocurrency tokens?
\newblock \emph{Journal of Business Research}, 163:\penalty0 113945, 2023.

\bibitem[Al{\'o}s-Ferrer and Netzer(2010)]{alos2010logit}
Carlos Al{\'o}s-Ferrer and Nick Netzer.
\newblock The logit-response dynamics.
\newblock \emph{Games and Economic Behavior}, 68\penalty0 (2):\penalty0
  413--427, 2010.

\bibitem[Al{\'o}s-Ferrer and Netzer(2015)]{alos2015robust}
Carlos Al{\'o}s-Ferrer and Nick Netzer.
\newblock Robust stochastic stability.
\newblock \emph{Economic Theory}, 58:\penalty0 31--57, 2015.

\bibitem[Al{\'o}s-Ferrer and Netzer(2017)]{alos2017convergence}
Carlos Al{\'o}s-Ferrer and Nick Netzer.
\newblock On the convergence of logit-response to (strict) nash equilibria.
\newblock \emph{Economic Theory Bulletin}, 5:\penalty0 1--8, 2017.

\bibitem[Arieli et~al.(2018)Arieli, Koren, and Smorodinsky]{crwodfundinggame}
Itai Arieli, Moran Koren, and Rann Smorodinsky.
\newblock The one-shot crowdfunding game.
\newblock In \emph{Proc. of the ACM Conference on Economics and Computation
  (EC)}, page 213–214, 2018.
\newblock ISBN 9781450358293.
\newblock \doi{10.1145/3219166.3219215}.
\newblock URL \url{https://doi.org/10.1145/3219166.3219215}.

\bibitem[Asadpour and Saberi(2009)]{asadpour2009inefficiency}
Arash Asadpour and Amin Saberi.
\newblock On the inefficiency ratio of stable equilibria in congestion games.
\newblock In \emph{Proc. of the 5th International Workshop on Internet and
  Network Economics (WINE)}, pages 545--552. Springer, 2009.

\bibitem[Auletta et~al.(2011)Auletta, Ferraioli, Pasquale, Penna, and
  Persiano]{auletta2011convergence}
Vincenzo Auletta, Diodato Ferraioli, Francesco Pasquale, Paolo Penna, and
  Giuseppe Persiano.
\newblock Convergence to equilibrium of logit dynamics for strategic games.
\newblock In \emph{Proc. of the 23rd annual ACM Symposium on Parallelism in
  Algorithms and Architectures (SPAA)}, pages 197--206, 2011.

\bibitem[Auletta et~al.(2012)Auletta, Ferraioli, Pasquale, and
  Persiano]{auletta2012metastability}
Vincenzo Auletta, Diodato Ferraioli, Francesco Pasquale, and Giuseppe Persiano.
\newblock Metastability of logit dynamics for coordination games.
\newblock In \emph{Proc. of the 23rd annual ACM-SIAM Symposium on Discrete
  Algorithms (SODA)}, pages 1006--1024. SIAM, 2012.

\bibitem[Auletta et~al.(2013{\natexlab{a}})Auletta, Ferraioli, Pasquale, Penna,
  and Persiano]{auletta2013logit}
Vincenzo Auletta, Diodato Ferraioli, Francesco Pasquale, Paolo Penna, and
  Giuseppe Persiano.
\newblock Logit dynamics with concurrent updates for local interaction games.
\newblock In \emph{Proc. of the 21st Annual European Symposium on Algorithms
  (ESA)}, pages 73--84. Springer, 2013{\natexlab{a}}.

\bibitem[Auletta et~al.(2013{\natexlab{b}})Auletta, Ferraioli, Pasquale, and
  Persiano]{auletta2013mixing}
Vincenzo Auletta, Diodato Ferraioli, Francesco Pasquale, and Giuseppe Persiano.
\newblock Mixing time and stationary expected social welfare of logit dynamics.
\newblock \emph{Theory of Computing Systems}, 53:\penalty0 3--40,
  2013{\natexlab{b}}.

\bibitem[Babaioff et~al.(2012)Babaioff, Feldman, Nisan, and
  Winter]{BABAIOFF2012999}
Moshe Babaioff, Michal Feldman, Noam Nisan, and Eyal Winter.
\newblock Combinatorial agency.
\newblock \emph{Journal of Economic Theory}, 147\penalty0 (3):\penalty0
  999--1034, 2012.
\newblock ISSN 0022-0531.
\newblock \doi{https://doi.org/10.1016/j.jet.2012.01.010}.
\newblock URL
  \url{https://www.sciencedirect.com/science/article/pii/S0022053112000117}.

\bibitem[Bil{\`{o}} et~al.(2023)Bil{\`{o}}, Gourv{\`{e}}s, and
  Monnot]{DBLP:journals/tcs/BiloGM23}
Vittorio Bil{\`{o}}, Laurent Gourv{\`{e}}s, and J{\'{e}}r{\^{o}}me Monnot.
\newblock Project games.
\newblock \emph{Theor. Comput. Sci.}, 940\penalty0 (Part):\penalty0 97--111,
  2023.
\newblock \doi{10.1016/J.TCS.2022.10.043}.
\newblock URL \url{https://doi.org/10.1016/j.tcs.2022.10.043}.

\bibitem[Blume(1993)]{blume1993statistical}
Lawrence~E Blume.
\newblock The statistical mechanics of strategic interaction.
\newblock \emph{Games and economic behavior}, 5\penalty0 (3):\penalty0
  387--424, 1993.

\bibitem[Blume(2003)]{BLUME2003251}
Lawrence~E. Blume.
\newblock How noise matters.
\newblock \emph{Games and Economic Behavior}, 44\penalty0 (2):\penalty0
  251--271, 2003.
\newblock ISSN 0899-8256.
\newblock \doi{https://doi.org/10.1016/S0899-8256(02)00554-7}.
\newblock URL
  \url{https://www.sciencedirect.com/science/article/pii/S0899825602005547}.

\bibitem[Boncinelli and Pin(2012)]{boncinelli2012stochastic}
Leonardo Boncinelli and Paolo Pin.
\newblock Stochastic stability in best shot network games.
\newblock \emph{Games and Economic Behavior}, 75\penalty0 (2):\penalty0
  538--554, 2012.

\bibitem[Bramoull{\'e} and Kranton(2007)]{bramoulle2007public}
Yann Bramoull{\'e} and Rachel Kranton.
\newblock Public goods in networks.
\newblock \emph{Journal of Economic theory}, 135\penalty0 (1):\penalty0
  478--494, 2007.

\bibitem[Buragohain et~al.(2003)Buragohain, Agrawal, and
  Suri]{buragohain2003game}
Chiranjeeb Buragohain, Divyakant Agrawal, and Subhash Suri.
\newblock A game theoretic framework for incentives in {P2P} systems.
\newblock In \emph{Proc. of the 3rd International Conference on Peer-to-Peer
  Computing (P2P)}, pages 48--56. IEEE, 2003.

\bibitem[Chaidos et~al.(2023)Chaidos, Kiayias, and
  Markakis]{chaidos2023blockchain}
Pyrros Chaidos, Aggelos Kiayias, and Evangelos Markakis.
\newblock Blockchain participation games.
\newblock In \emph{Proc. of the 19th International Conference on Web and
  Internet Economics (WINE)}, pages 169--187. Springer, 2023.

\bibitem[Chang(2020)]{chang2020economics}
Jen-Wen Chang.
\newblock The economics of crowdfunding.
\newblock \emph{American Economic Journal: Microeconomics}, 12\penalty0
  (2):\penalty0 257--280, 2020.

\bibitem[Chen and Saloff-Coste(2013)]{chen2013mixing}
Guan-Yu Chen and Laurent Saloff-Coste.
\newblock On the mixing time and spectral gap for birth and death chains.
\newblock \emph{ALEA-Latin American Journal of Probability and Mathematical
  Statistics}, 10\penalty0 (1):\penalty0 293--321, 2013.

\bibitem[Costain and Nakov(2019)]{costain2019logit}
James Costain and Anton Nakov.
\newblock Logit price dynamics.
\newblock \emph{Journal of Money, Credit and Banking}, 51\penalty0
  (1):\penalty0 43--78, 2019.

\bibitem[Coucheney et~al.(2014)Coucheney, Durand, Gaujal, and
  Touati]{coucheney2014general}
Pierre Coucheney, St{\'e}phane Durand, Bruno Gaujal, and Corinne Touati.
\newblock General revision protocols in best response algorithms for potential
  games.
\newblock In \emph{7th Int. Conference on NETwork Games, COntrol and
  OPtimization (NetGCoop)}, pages 239--246. IEEE, 2014.

\bibitem[Crawford and Iriberri(2007)]{crawford2007level}
Vincent~P Crawford and Nagore Iriberri.
\newblock Level-k auctions: Can a nonequilibrium model of strategic thinking
  explain the winner's curse and overbidding in private-value auctions?
\newblock \emph{Econometrica}, 75\penalty0 (6):\penalty0 1721--1770, 2007.

\bibitem[Dall’Asta et~al.(2011)Dall’Asta, Pin, and
  Ramezanpour]{dall2011optimal}
Luca Dall’Asta, Paolo Pin, and Abolfazl Ramezanpour.
\newblock Optimal equilibria of the best shot game.
\newblock \emph{Journal of Public Economic Theory}, 13\penalty0 (6):\penalty0
  885--901, 2011.

\bibitem[Do~Dinh and Hollender(2024)]{do2024tight}
J{\'e}r{\'e}mi Do~Dinh and Alexandros Hollender.
\newblock Tight inapproximability of nash equilibria in public goods games.
\newblock \emph{Information Processing Letters}, 186:\penalty0 106486, 2024.

\bibitem[EigenLabs(2014)]{eigenlayer}
EigenLabs.
\newblock The universal intersubjective work token.
\newblock \emph{White paper}, 2014.
\newblock url: https://www.blog.eigenlayer.xyz/eigen/.

\bibitem[Fan et~al.(2023)Fan, Min, Wu, and Cai]{altruistic}
Sizheng Fan, Tian Min, Xiao Wu, and Wei Cai.
\newblock Altruistic and profit-oriented: Making sense of roles in web3
  community from airdrop perspective.
\newblock In \emph{Proc. of the Conference on Human Factors in Computing
  Systems (CHI)}. Association for Computing Machinery, 2023.
\newblock ISBN 9781450394215.
\newblock \doi{10.1145/3544548.3581173}.
\newblock URL \url{https://doi.org/10.1145/3544548.3581173}.

\bibitem[Ferraioli and Ventre(2017)]{ferraioli2017social}
Diodato Ferraioli and Carmine Ventre.
\newblock Social pressure in opinion games.
\newblock In \emph{Proc. of the 26th International Joint Conference on
  Artificial Intelligence (IJCAI)}, pages 3661--3667, 2017.

\bibitem[Ferraioli et~al.(2016)Ferraioli, Goldberg, and
  Ventre]{ferraioli2016decentralized}
Diodato Ferraioli, Paul~W Goldberg, and Carmine Ventre.
\newblock Decentralized dynamics for finite opinion games.
\newblock \emph{Theoretical Computer Science}, 648:\penalty0 96--115, 2016.

\bibitem[Fr{\"o}wis and B{\"o}hme(2019)]{frowis2019operational}
Michael Fr{\"o}wis and Rainer B{\"o}hme.
\newblock The operational cost of ethereum airdrops.
\newblock In \emph{Data Privacy Management, Cryptocurrencies and Blockchain
  Technology: ESORICS 2019 International Workshops, DPM 2019 and CBT 2019},
  pages 255--270. Springer, 2019.

\bibitem[Galeotti and Goyal(2010)]{10.1257/aer.100.4.1468}
Andrea Galeotti and Sanjeev Goyal.
\newblock The law of the few.
\newblock \emph{American Economic Review}, 100\penalty0 (4):\penalty0
  1468–92, September 2010.
\newblock \doi{10.1257/aer.100.4.1468}.
\newblock URL \url{https://www.aeaweb.org/articles?id=10.1257/aer.100.4.1468}.

\bibitem[Galeotti et~al.(2010)Galeotti, Goyal, Jackson, Vega-Redondo, and
  Yariv]{galeotti2010network}
Andrea Galeotti, Sanjeev Goyal, Matthew~O Jackson, Fernando Vega-Redondo, and
  Leeat Yariv.
\newblock Network games.
\newblock \emph{The review of economic studies}, 77\penalty0 (1):\penalty0
  218--244, 2010.

\bibitem[Georganas(2011)]{georganas2011english}
Sotiris Georganas.
\newblock English auctions with resale: An experimental study.
\newblock \emph{Games and Economic Behavior}, 73\penalty0 (1):\penalty0
  147--166, 2011.

\bibitem[Gilboa and Nisan(2022)]{gilboa2022complexity}
Matan Gilboa and Noam Nisan.
\newblock Complexity of public goods games on graphs.
\newblock In \emph{Proc. of the 15th International Symposium on Algorithmic
  Game Theory (SAGT)}, pages 151--168. Springer, 2022.

\bibitem[Goeree and Holt(2001)]{goeree2001ten}
Jacob~K Goeree and Charles~A Holt.
\newblock Ten little treasures of game theory and ten intuitive contradictions.
\newblock \emph{American Economic Review}, 91\penalty0 (5):\penalty0
  1402--1422, 2001.

\bibitem[Husain(2024)]{IOHK-PC}
Omer Husain.
\newblock Announcing the alpha v1 release of the partner chains toolkit.
\newblock \emph{IOHK blog post}, August 2024.
\newblock URL \url{https://iohk.io/en/blog/posts/2024/08/page-1/}.

\bibitem[Jiménez-Jiménez et~al.(2021)Jiménez-Jiménez, Alba-Fernández, and
  Martínez-Gómez]{math9212757}
Francisca Jiménez-Jiménez, Maria~Virtudes Alba-Fernández, and Cristina
  Martínez-Gómez.
\newblock Attracting the right crowd under asymmetric information: A game
  theory application to rewards-based crowdfunding.
\newblock \emph{Mathematics}, 9\penalty0 (21), 2021.
\newblock ISSN 2227-7390.
\newblock \doi{10.3390/math9212757}.
\newblock URL \url{https://www.mdpi.com/2227-7390/9/21/2757}.

\bibitem[Kleer(2021)]{kleer2021sampling}
Pieter Kleer.
\newblock Sampling from the gibbs distribution in congestion games.
\newblock In \emph{Proc. of the 22nd ACM Conference on Economics and
  Computation (EC)}, pages 679--680, 2021.

\bibitem[Klimm and Stahlberg(2023)]{10.1145/3580507.3597780}
Max Klimm and Maximilian~J. Stahlberg.
\newblock Complexity of equilibria in binary public goods games on undirected
  graphs.
\newblock In \emph{Proc. of the 24th ACM Conference on Economics and
  Computation (EC)}, page 938–955, New York, NY, USA, 2023. Association for
  Computing Machinery.
\newblock ISBN 9798400701047.
\newblock \doi{10.1145/3580507.3597780}.
\newblock URL \url{https://doi.org/10.1145/3580507.3597780}.

\bibitem[Levin et~al.(2006)Levin, Peres, and Wilmer]{LevinPeresWilmer2006}
David~A. Levin, Yuval Peres, and Elizabeth~L. Wilmer.
\newblock \emph{{Markov chains and mixing times}}.
\newblock American Mathematical Society, 2006.

\bibitem[Lommers et~al.(2023)Lommers, Makridis, and Verboven]{designing}
Kristof Lommers, Christos Makridis, and Lieven Verboven.
\newblock Designing airdrops.
\newblock \emph{Available at SSRN 4427295}, 2023.

\bibitem[Makridis et~al.(2023)Makridis, Fröwis, Sridhar, and Böhme]{rise}
Christos~A. Makridis, Michael Fröwis, Kiran Sridhar, and Rainer Böhme.
\newblock {The rise of decentralized cryptocurrency exchanges: Evaluating the
  role of airdrops and governance tokens}.
\newblock \emph{Journal of Corporate Finance}, 79\penalty0 (C), 2023.
\newblock \doi{10.1016/j.jcorpfin.2023.1}.
\newblock URL
  \url{https://ideas.repec.org/a/eee/corfin/v79y2023ics092911992300007x.html}.

\bibitem[Mamageishvili and Penna(2016)]{MamageishviliP16}
Akaki Mamageishvili and Paolo Penna.
\newblock Tighter bounds on the inefficiency ratio of stable equilibria in load
  balancing games.
\newblock \emph{Oper. Res. Lett.}, 44\penalty0 (5):\penalty0 645--648, 2016.
\newblock \doi{10.1016/J.ORL.2016.07.014}.
\newblock URL \url{https://doi.org/10.1016/j.orl.2016.07.014}.

\bibitem[Messias et~al.(2023)Messias, Yaish, and Livshits]{harder}
Johnnatan Messias, Aviv Yaish, and Benjamin Livshits.
\newblock Airdrops: Giving money away is harder than it seems.
\newblock \emph{arXiv preprint arXiv:2312.02752}, 2023.

\bibitem[Montanari and Saberi(2009)]{10.1109/FOCS.2009.64}
Andrea Montanari and Amin Saberi.
\newblock Convergence to equilibrium in local interaction games.
\newblock In \emph{Proc. of the 50th Annual IEEE Symposium on Foundations of
  Computer Science (FOCS)}, page 303–312. IEEE Computer Society, 2009.
\newblock ISBN 9780769538501.
\newblock \doi{10.1109/FOCS.2009.64}.
\newblock URL \url{https://doi.org/10.1109/FOCS.2009.64}.

\bibitem[M{\"u}ller et~al.(2021)M{\"u}ller, Nesterov, and
  Shikhman]{muller2021dynamic}
David M{\"u}ller, Yurii Nesterov, and Vladimir Shikhman.
\newblock Dynamic pricing under nested logit demand.
\newblock \emph{arXiv preprint arXiv:2101.04486}, 2021.

\bibitem[Okada and Tercieux(2012)]{okada2012log}
Daijiro Okada and Olivier Tercieux.
\newblock Log-linear dynamics and local potential.
\newblock \emph{Journal of Economic Theory}, 147\penalty0 (3):\penalty0
  1140--1164, 2012.

\bibitem[Palacios and Tetali(1996)]{PALACIOS1996119}
JoséLuis Palacios and Prasad Tetali.
\newblock A note on expected hitting times for birth and death chains.
\newblock \emph{Statistics \& Probability Letters}, 30\penalty0 (2):\penalty0
  119--125, 1996.
\newblock ISSN 0167-7152.
\newblock \doi{https://doi.org/10.1016/0167-7152(95)00209-X}.
\newblock URL
  \url{https://www.sciencedirect.com/science/article/pii/016771529500209X}.

\bibitem[Papadimitriou and Peng(2021)]{10.1145/3465456.3467616}
Christos Papadimitriou and Binghui Peng.
\newblock Public goods games in directed networks.
\newblock In \emph{Proc. of the 22nd ACM Conference on Economics and
  Computation (EC)}, page 745–762, 2021.
\newblock ISBN 9781450385541.
\newblock \doi{10.1145/3465456.3467616}.
\newblock URL \url{https://doi.org/10.1145/3465456.3467616}.

\bibitem[Penna(2018)]{DBLP:journals/ijgt/Penna18}
Paolo Penna.
\newblock The price of anarchy and stability in general noisy best-response
  dynamics.
\newblock \emph{Int. J. Game Theory}, 47\penalty0 (3):\penalty0 839--855, 2018.
\newblock \doi{10.1007/S00182-017-0601-Y}.
\newblock URL \url{https://doi.org/10.1007/s00182-017-0601-y}.

\bibitem[Sawa(2019)]{sawa2019stochastic}
Ryoji Sawa.
\newblock Stochastic stability under logit choice in coalitional bargaining
  problems.
\newblock \emph{Games and economic behavior}, 113:\penalty0 633--650, 2019.

\bibitem[Shao et~al.(2023)Shao, Cheung, and Huang]{10.1109/TNET.2023.3274114}
Qi~Shao, Man~Hon Cheung, and Jianwei Huang.
\newblock Crowdfunding with cognitive limitations.
\newblock \emph{IEEE/ACM Trans. Netw.}, 31\penalty0 (6):\penalty0 2714–2729,
  may 2023.
\newblock ISSN 1063-6692.
\newblock \doi{10.1109/TNET.2023.3274114}.
\newblock URL \url{https://doi.org/10.1109/TNET.2023.3274114}.

\bibitem[Soundy et~al.(2021)Soundy, Wang, Stevens, and Chan]{soundy2021game}
Jared Soundy, Chenhao Wang, Clay Stevens, and Hau Chan.
\newblock Game-theoretic analysis of effort allocation of contributors to
  public projects.
\newblock 2021.

\bibitem[van~de Geer and den Boer(2022)]{van2022price}
Ruben van~de Geer and Arnoud~V den Boer.
\newblock Price optimization under the finite-mixture logit model.
\newblock \emph{Management Science}, 68\penalty0 (10):\penalty0 7480--7496,
  2022.

\bibitem[Wang et~al.(2023)Wang, Wang, and Wei]{wang2023bundling}
Lu~Wang, Xue Wang, and Hang Wei.
\newblock Bundling and pricing strategies in crowdfunding.
\newblock \emph{Available at SSRN 4432999}, 2023.

\bibitem[Ward(2024)]{IOHK-PC-02}
Mike Ward.
\newblock Partner chains are coming to cardano.
\newblock \emph{IOHK blog post}, November 2024.
\newblock URL \url{https://iohk.io/en/blog/posts/2023/11/page-1/}.

\bibitem[Worldcoin(2025)]{worldcoin}
Worldcoin, 2025.
\newblock Description available at https://world.org/worldcoin-token.

\bibitem[Yaish and Livshits(2024)]{tierdrop}
Aviv Yaish and Benjamin Livshits.
\newblock Tierdrop: Harnessing airdrop farmers for user growth, 06 2024.

\bibitem[Yan and Chen(2021)]{yan2021optimal}
Xiang Yan and Yiling Chen.
\newblock Optimal crowdfunding design.
\newblock In \emph{Proc. of the 20th International Conference on Autonomous
  Agents and MultiAgent Systems (AAMAS)}, pages 1704--1706, 2021.

\bibitem[Yu et~al.(2020)Yu, Zhou, Brantingham, and
  Vorobeychik]{yu2020computing}
Sixie Yu, Kai Zhou, Jeffrey Brantingham, and Yevgeniy Vorobeychik.
\newblock Computing equilibria in binary networked public goods games.
\newblock In \emph{Proceedings of the AAAI Conference on Artificial
  Intelligence}, volume~34, pages 2310--2317, 2020.

\end{thebibliography}

\newpage
\appendix

\onecolumn

\section*{\centering \Huge  Appendix}

\section{Postponed Proofs}

\subsection{Proof of Theorem~\ref{th:potential-game}}
\begin{proof}
	Given an effort vector $a=(a_1,\ldots,a_n)$, let 
	us consider the  (potential) function in \eqref{eq:unifrom-rew-potential}.
	We show that this function is indeed an exact potential function. That is, for any $a$ and $a'=(a_i', a_{-i})$, we have
	\begin{align*}
		u_i(a) -  u_i(a') = & (r_i \cdot t(a) - c_i \cdot a_i) - (r_i \cdot t(a')- c_i \cdot a'_i) \\
	= & \gamma \cdot (t(a) - t(a') ) - c_i \cdot (a_i- a'_i) 
 \\
	= & \gamma \cdot (t(a) - t(a') ) - c_i \cdot (a_i- a'_i) + \sum_{j \neq i} c_j \cdot a_j - \sum_{j \neq i} c_j \cdot a_j
	\\ \intertext{and since $a_j'=a_j$ for all $j \neq i$, }
 = & \gamma \cdot (t(a) - t(a') ) - c_i \cdot (a_i- a'_i) + \sum_{j \neq i} c_j \cdot a_j' - \sum_{j \neq i} c_j \cdot a_j \\
	= & \gamma \cdot (t(a) - t(a') ) - SC(a) + SC(a') 
	\\ = & \phi(a) - \phi(a') 
	\end{align*}
which completes the proof. 
\end{proof}

\subsection{Proof of Theorem~\ref{th:Nash-equilibria}}
\begin{proof}
    The proof follows by the equilibrium condition in \eqref{eq:equilibrium-def} and by distinguishing between the two cases. Note that the monotonicity of $V(\cdot)$ implies that the difference $V(a) - V(a^+)$ can be negative for $a_i^+ > a_i$, which then leads to the first condition (upper bound on $\frac{\rho}{n}$) when applying \eqref{eq:equilibrium-def} with $a'=a^+$. 
\end{proof}

\subsection{Proof of Theorem~\ref{th:stable-equilibria}}
\begin{proof}
    From \cite{blume1993statistical,BLUME2003251}, the stationary distribution of the underlying Markov chain concentrates uniformly on the potential maximizers as $\beta \rightarrow \infty$. The result follows by considering the potential function in \eqref{eq:unifrom-rew-potential} together with the identity \eqref{eq:inv-monetary-rewards}. 
\end{proof}

\subsection{Proof of Theorem~\ref{th:mixing-birth-and-death}}
We first observe that, since we are considering \emph{symmetric} technology functions, the resulting dynamics boils down to a proper the birth and death  process. Recall that we keep track of the number $\ell$ of actively participating players in a given profile $a$, i.e., $\ell = \sum_i a_i$ and $a_i\in\{0,1\}$. The process is defined by the probabilities in \eqref{eq:birth-death-rate-general}, where $p_i^{\beta}(\cdot)$ is the logit response  \eqref{eq:logit-response}. Since we assume all players have the same cost and the same actions, and the technology function is symmetric, all these probabilities $p_i^{\beta}(\cdot)$ are the same. Hence the probabilities in \eqref{eq:birth-death-rate-general} are independent of $i$ and indeed yield a proper birth and death process.

Next observe that, using the  birth and death process in \eqref{eq:birth-death-rate-general}, we obtain a useful ``recursive'' formulation for the stationary distribution:

\begin{lemma}\label{le:stationary:recursive}
     For binary uniform costs, airdrop rewards, any symmetric technology function, and any inverse noise parameter $\beta>0$, the stationary distribution \eqref{eq:birth-and-death-stationary} of the birth and death process satisfies
     \begin{align}
         \hat{\pi}(\ell) = \hat{\pi}(0) \cdot \frac{\binom{n}{\ell}}{\exp(\alpha\beta \ell)} \cdot \exp\left(\beta \gamma(t(\ell)-t(0))\right) \label{eq:stationary:recursion} 
     \end{align}
     where $\hat{\pi}(\ell) = \binom{n}{\ell} \cdot \pi(\ell)$ is the stationary distribution of the corresponding birth and death process. 
\end{lemma}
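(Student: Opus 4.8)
The plan is to exploit the fact that any finite irreducible birth and death chain is reversible, so that its stationary distribution $\hat{\pi}$ is the unique solution of the detailed balance equations $\hat{\pi}(\ell)\,p(\ell) = \hat{\pi}(\ell+1)\,q(\ell+1)$ for $0 \le \ell < n$. Solving these recursively gives $\hat{\pi}(\ell) = \hat{\pi}(0)\prod_{k=0}^{\ell-1} p(k)/q(k+1)$, so the entire lemma reduces to evaluating this product of one-step ratios.

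Next I would compute $p(k)/q(k+1)$ from the birth and death probabilities \eqref{eq:birth-death-rate-general} and the logit response \eqref{eq:logit-response}. An up-move from level $k$ requires picking one of the $n-k$ non-contributing players (probability $(n-k)/n$) and having it switch to $1$; a down-move from level $k+1$ requires picking one of the $k+1$ contributing players (probability $(k+1)/n$) and having it switch to $0$. In both directions the player whose update is being considered faces the same two alternatives: play $0$ while $k$ others contribute, with utility $\gamma\, t(k)$; or play $1$, with utility $\gamma\, t(k+1) - \alpha$. Hence $p_i^{\beta}(1\mid a_{-i})$ and $p_i^{\beta}(0\mid a_{-i})$ share the same normalizing constant $Z_i^{\beta}$, which cancels in the ratio, leaving $p(k)/q(k+1) = \frac{n-k}{k+1}\cdot \exp(\beta\gamma(t(k+1)-t(k)) - \alpha\beta)$.

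Finally I would telescope the product from $k=0$ to $\ell-1$: the factors $\frac{n-k}{k+1}$ multiply to $\binom{n}{\ell}$; the differences $t(k+1)-t(k)$ sum to $t(\ell) - t(0)$ inside the exponent; and the $\ell$ copies of $\exp(-\alpha\beta)$ combine to $\exp(-\alpha\beta\,\ell)$. Collecting these factors gives exactly \eqref{eq:stationary:recursion}. I expect the only delicate point to be the bookkeeping in the second step --- correctly identifying which contribution level the remaining players occupy when an up-move versus a down-move is attempted, so that the logit normalizers really are the same for the two directions and cancel; once that is settled the rest is a routine product computation, and reversibility of the birth and death chain is standard and may be quoted.
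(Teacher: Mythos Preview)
Your argument is correct. The detailed balance computation is clean: at level $k$ the up-moving player and at level $k+1$ the down-moving player both face the same pair of utilities $\gamma t(k)$ and $\gamma t(k+1)-\alpha$, so the logit normalizers cancel exactly as you say, and the telescoping is routine.

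However, your route differs from the paper's. The paper does not solve detailed balance step by step; instead it invokes (implicitly, via Theorem~\ref{th:potential-game} and the standard result of \citet{blume1993statistical}) that for an exact potential game the logit stationary distribution is the Gibbs measure $\pi(a)\propto\exp(\beta\phi(a))$. With $\phi(a)=\gamma t(\ell)-\alpha\ell$ this gives $\hat{\pi}(\ell)=\binom{n}{\ell}\exp(\beta\gamma t(\ell)-\alpha\beta\ell)/Z$ directly, and the lemma follows by dividing $\hat{\pi}(\ell)$ by $\hat{\pi}(0)$ in one line. Your approach is more self-contained --- it does not require knowing or quoting the potential game/Gibbs connection and works purely from the birth and death structure --- while the paper's approach is shorter but leans on that external fact. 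Both arrive at the same place; yours would be preferable in a setting where the Gibbs form of the stationary distribution has not already been established.
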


\begin{proof}
Observe that $\hat{\pi}(0) = \frac{\exp(\beta\gamma t(0))}{Z}$ and, for every $\ell \in [0,n]$, we have 
   \begin{align*}
       \hat{\pi}(\ell) =  \frac{\binom{n}{\ell}}{Z} \cdot \frac{\exp(\beta\gamma t(\ell))}{\exp(\beta\alpha \ell)}  =  \frac{\binom{n}{\ell}}{Z} \cdot \frac{\exp(\beta\gamma t(0))}{\exp(\beta\alpha \ell)} \cdot \exp(\beta\gamma (t(\ell)-t(0)) \\ = \hat{\pi}(0) \cdot \frac{\binom{n}{\ell}}{\exp(\alpha\beta \ell)} \cdot \exp(\beta\gamma(t(\ell)-t(0)))
   \end{align*}
   which completes the proof.
\end{proof}
We are now in a position to prove Theorem~\ref{th:mixing-birth-and-death}
\begin{proof}[Proof of Theorem~\ref{th:mixing-birth-and-death}]
    Theorem~1.1 in \cite{chen2013mixing} states that
    \begin{align}
        \tmix(\epsilon) \leq \frac{18\tcut}{\epsilon^2} \ , && \tmix(1/10) \geq \frac{\tcut}{6} \ .
    \end{align}
    Since by definition  $\tmix=\tmix(1/4)$, the upper bound in \eqref{eq:tmix-general} follows immediately by taking $\epsilon=1/4$ in the first inequality above. As for the lower bound, we simply use  the well-known relation $\tmix(\epsilon) \leq \left\lceil \log_2 (1/\epsilon)\right \rceil \cdot \tmix $ (see e.g. \cite{LevinPeresWilmer2006}). Hence, $\tmix(1/10)\leq 4 \tmix$ which together with the second inequality above gives the lower bound in \eqref{eq:tmix-general}.
\end{proof}

\subsection{Proof of Theorem~\ref{thm:hitting-time-general}}
\begin{proof} 
    From Theorem~2.3 in \cite{PALACIOS1996119}, we have 
    \begin{align}\label{eq:hitting-PT96}
      \thit(\ell) \geq \thit(\ell_2+1) \geq    E_{\ell_2}T_{\ell_2+1}= \frac{1}{\hat{\pi}(\ell_2)p(\ell_2)}\sum_{\ell=0}^{\ell_2} \hat{\pi}(\ell)
    \end{align}
    where  $E_a T_b$ is the expected time to reach state $b$ for the first time starting from state $a$.  Since the drift in $I$ is at most $d_I$,
    \begin{align}
       \frac{\hat{\pi}(\ell_1)}{\hat{\pi}(\ell_2)}  = \frac{\hat{\pi}(\ell_1)}{\hat{\pi}(\ell_1+1)} \frac{\hat{\pi}(\ell_1+1)}{\hat{\pi}(\ell_1+2)} \cdots \frac{\hat{\pi}(\ell_2-1)}{\hat{\pi}(\ell_2)} \geq \left(\frac{1}{d_I}\right)^{\ell_2-\ell_1} \ .
    \end{align}
    This and \eqref{eq:hitting-PT96} imply the first part of the theorem. As for the second part, we show that the drift in this interval is at most $d_I = \exp(\beta(\frac{\rho}{n} \cdot s - \alpha))\cdot \frac{n-\ell_1}{\ell_1+1} $. We simply observe that 
    \begin{align}
        V(\ell_1+a+1) - V(\ell_1+a)  \leq s && 
    \end{align}
    for all $a \in [0,\ell_2-\ell_1-1]$. This and Lemma~\ref{le:stationary:recursive} imply
    \begin{align*}
    \frac{\hat{\pi}(\ell_1+a+1)}{\hat{\pi}(\ell_1+a)} = &
    \frac{\binom{n}{\ell_1+a+1}\cdot \exp(\beta(\gamma t(\ell_1+a+1) - \alpha \cdot (\ell_1+a+1)))}{\binom{n}{\ell_1+a}\cdot \exp(\beta(\gamma t(\ell_1+a) - \alpha \cdot (\ell_1+a)))} \\
    = & \frac{n-\ell_1+a}{\ell_1+a+1} \cdot \exp(\beta(\gamma(t(\ell_1+a+1)-t(\ell_1+a))-\alpha) 
    \\
    \stackrel{\eqref{eq:inv-monetary-rewards}}{=} & \frac{n-\ell_1+a}{\ell_1+a+1} \cdot \exp(\beta(\frac{\rho}{n}(V(\ell_1+a+1)-V(\ell_1+a))-\alpha) \\
    \leq & 
    \frac{n-\ell_1+a}{\ell_1+a+1} \cdot  \exp(\beta(\frac{\rho}{n} \cdot s - \alpha)) \leq \frac{n-\ell_1}{\ell_1+1} \cdot \exp(\beta(\frac{\rho}{n} \cdot s - \alpha))  \ .
    \end{align*}
      This completes the proof.
\end{proof}

\subsection{Proof of Theorem~\ref{th:pot-max-threshold}}
\begin{proof}
        We apply Theorem~\ref{th:stable-equilibria} and evaluate the set $\potmax_\rho$ of potential maximizers in our game depending on the rewards. We first observe that this set can only contain pure Nash equilibria, that is, the case $\ell=0$ or $\ell=\tau$, as every other state has a smaller potential.  The condition for $\ell=\tau$ being the highest potential is equivalent to 
    \begin{align}
        \phi(0) < \phi(\tau)
 && \stackrel{\eqref{eq:unifrom-rew-potential}}{\Longleftrightarrow} &&    \frac{\rho}{n}\low{V} < \frac{\rho}{n} \high{V} - \alpha \cdot \tau \end{align}
 which is the case in which the distribution selects the high value outcome ($PMAX_\rho$ contains only the profiles with $\ell=\tau$). The opposite inequality corresponds to the distribution  selecting the low value outcome ($\potmax_\rho$ contains only the profile with $\ell=0$). Finally, if equality holds, the distribution concentrates equally between both kind of pure Nash equilibria ($\potmax_\rho$ contains  the bad equilibrium with $\ell=0$ and the $\binom{n}{\tau}$ good equilibria in which $\ell=\tau$). 
\end{proof}

\subsection{Proof of Theorem~\ref{th:threshold-finite-beta:prob}}
\begin{proof}
We express the potential function \eqref{eq:unifrom-rew-potential} as a function of $\ell=\sum_i a_i$, 
\begin{align}\label{eq:pot-threshold}
         \phi(a) = 
         \begin{cases}\low{\phi}(\ell) & \ell < \tau \\
        \high{\phi}(\ell)  & \ell \geq   \tau
    \end{cases}  && \begin{array}{l}
\low{\phi}(\ell) = \gamma\low{t}  - \alpha \ell
\\ 
\high{\phi}(\ell) = \gamma\high{t}  - \alpha \ell
\end{array} \ . 
    \end{align}
    The stationary distribution is 
    \begin{align}\label{eq:stationary-threshold-state}
         \pi_\gamma(\ell) = \frac{1}{Z}\cdot
         \begin{cases}\exp(\beta\low{\phi}(\ell)) & \ell < \tau \\ 
        \exp(\beta\high{\phi}(\ell))  & \ell \geq   \tau
    \end{cases} 
    \end{align}
    where 
    \begin{align}
        Z = \sum_{\ell=0}^n \exp(\beta\phi(\ell)) \cdot  \binom{n}{\ell} = \underbrace{\sum_{\ell=0}^{\tau-1} \exp(\beta\low{\phi}(\ell)) \cdot  \binom{n}{\ell}}_{\low{Z}}
        + 
        \underbrace{\sum_{\ell=\tau}^n \exp(\beta\high{\phi}(\ell)) \cdot  \binom{n}{\ell}}_{\high{Z}} \\ 
        = \exp(\beta\gamma\low{t})\cdot \underbrace{\sum_{\ell=0}^{\tau-1} \exp(-\beta\alpha \ell)\cdot \binom{n}{\ell}  }_{\low{S}}
        + 
        \exp(\beta\gamma\high{t})\cdot \underbrace{\sum_{\ell=\tau}^n \exp(-\beta\alpha \ell) \cdot  \binom{n}{\ell}}_{\high{S}} \ . 
    \end{align}
    Finally observe that
    \begin{align}
        \high{p}(\rho) = \sum_{\ell = \tau}^n \pi_\gamma(\ell)  \cdot  \binom{n}{\ell} = \frac{\high{Z}}{\low{Z} + \high{Z}} = \frac{1}{1 + \low{Z}/\high{Z}}
    \end{align}
    and 
    \begin{align}
        \low{Z}/\high{Z} = \frac{\exp(\beta\gamma\low{t})\cdot \low{S}}{\exp(\beta\gamma\high{t})\cdot \high{S}}
    \end{align}
    which proves the first part for $C=\low{S}/\high{S}$. We also observe that $\high{p}(0) = \frac{1}{1 + \low{Z}/\high{Z}} = \frac{1}{1 + C}$, that is, $C=\frac{1}{\high{p}(0)}-1= \frac{1 - \high{p}(0)}{\high{p}(0)}$. This completes the proof.
\end{proof}

\subsection{Proof of Theorem~\ref{th:threshold:profit-logit}}
\begin{proof}
We first show the identity \eqref{eq:expected-profit-threshold} using Theorem~\ref{th:threshold-finite-beta:prob}:  \begin{align}
        \profit(\rho)  & =(1-\rho)\cdot \high{V}\cdot \high{p}(\rho) - d_V \stackrel{(Thm~\ref{th:threshold-finite-beta:prob})}{=} \high{V} \cdot \frac{1-\rho}{1 + C \cdot \exp(-\rho \cdot B)} - d_V\ . 
    \end{align}
Next consider  the derivative of the expression in \eqref{eq:expected-profit-threshold} with respect to $\rho$, 
    \begin{align}\label{eq:exp-profit-derivative}
        \frac{\partial}{\partial \rho}\left(
        \frac{1-\rho}{1+C\cdot \exp(-B\cdot \rho)}  \cdot \high{V}\right)
        = -\frac{\mathrm{e}^{B{\rho}} \left(\mathrm{e}^{B{\rho}} + C \left(B \left({\rho} - 1\right) + 1\right)\right)}{\left(\mathrm{e}^{B{\rho}} + C\right)^{2}} \ . 
    \end{align}
    The derivative above \eqref{eq:exp-profit-derivative} is negative for 
    \begin{align}
        B (\rho -1 ) + 1>0 && \Longleftrightarrow &&
        \rho > 1 - \frac{1}{B} = \bar{\rho}
    \end{align}
     and thus the optimal airdrop rewards $\rho^*$ must satisfy $\rho^* \leq \bar{\rho}$. 
    
    For $\rho =0$, the derivative above \eqref{eq:exp-profit-derivative} can be rewritten as 
    \begin{align}
    -\frac{1 + C \left(-B + 1\right)}{\left(1 + C\right)^{2}} = \frac{C \left(B - 1\right)-1}{\left(1 + C\right)^{2}} \ . 
    \end{align}
    This quantity is strictly positive for $B>1/C+1$, that is, $B=\beta\high{V}/n>1+\frac{\high{p}(0)}{1- \high{p}(0)} = \frac{1}{1- \high{p}(0)}$. In this case, by continuity, the derivative in \eqref{eq:exp-profit-derivative} is also strictly positive for all sufficiently small $\rho>0$. Hence, the optimal airdrop rewards $\rho^*$ must be strictly positive for $n < \beta \cdot \high{V} \cdot (1 - \high{p}(0))$. 
\end{proof}

\subsection{Proof of Theorem~\ref{th:mixing-LB-threshold}}

\begin{proof} We leverage on the bounds for birth and death chains (Theorem~\ref{th:mixing-birth-and-death}). To this end, we note that in this particular case of threshold functions, the birth and death probabilities \eqref{eq:birth-death-rate-general} are of the form 
\begin{align}\label{eq:birth-death-threshold}
     p(\ell) & := 
     \frac{n - \ell}{n} \cdot \begin{cases}
          p_{\alpha\beta} & \ell \neq \tau-1 \\
        p_{(\alpha-\frac{\rho}{n}\Delta_V)\beta} & \ell = \tau-1
     \end{cases} \\ 
       q(\ell) & := \frac{\ell}{n} \cdot \begin{cases}
         1- p_{\alpha\beta} & \ell \neq \tau \\
        1 - p_{(\alpha-\frac{\rho}{n}\Delta_V)\beta} & \ell = \tau
     \end{cases}   
\end{align}
where $\Delta_V = \high{V}-\low{V}$ and
\begin{align}
     p_{x}:=\frac{1}{1 + \exp(x)} \ .
\end{align}

We show below that, because $\high{p}(\rho)>1/2$, there exists  $\ell_0\geq \tau$ satisfying the condition of Theorem~\ref{th:mixing-birth-and-death}. 
This  yields the desired lower bound as follows:
\begin{align}
    \tcut \stackrel{(Thm~\ref{th:mixing-birth-and-death})}{\geq} \sum_{\ell = 0}^{\tau-1} \frac{\hat{\pi}([0,\ell])}{\hat{\pi}(\ell)p(\ell)} \geq 
    \frac{\hat{\pi}(0)}{\hat{\pi}(\tau-1)}  \stackrel{(Lem~\ref{le:stationary:recursive})}{=} \frac{\exp(\alpha\beta(\tau-1))}{\binom{n}{\tau-1}} \ . 
    \end{align}
It thus only remains to show the claim about $\ell_0$. We simply observe that, by definition of threshold technology \eqref{eq:threshold-t}, we have
\begin{align}
    \hat{\pi}([0,\tau-1]) =  1 - \high{p}(\rho) < 1/2
\end{align}
and therefore we can find $\ell_0 \geq \tau+1$ such that  $\hat{\pi}([0,\tau]) \geq 1/2$ as required by Theorem~\ref{th:mixing-birth-and-death}. This completes the proof. 
\end{proof}

\subsection{Proof of Theorem~\ref{th:hitting-threshold-UB-LB} (Part~\ref{th:hitting-threshold-UB-itm})}
\begin{proof}
The proof goes through the following steps:
\begin{enumerate}
       \item Theorem~2.3 in \cite{PALACIOS1996119} states that 
    \begin{align}\label{eq:hitting-progress}
      E_{\ell}T_{\ell+1}= \frac{1}{\hat{\pi}(\ell)p(\ell)}\sum_{h=0}^{\ell} \hat{\pi}(h)
    \end{align}
    where  $E_a T_b$ is the expected time to reach state $b$ for the first time starting from state $a$.
    \item Definition of $\ell^*$ in \eqref{eq:equilibrium-alphabeta} implies that the drift (Definition~\ref{def:drift}) in the interval $I = [0, \ell^*]$ is at least $d_I = 1$, that is, 
    $\frac{\hat{\pi}(\ell+1)}{\hat{\pi}(\ell)}\geq 1$ for all $\ell \in [0,\ell^*-1]$. Therefore, 
    \begin{align}
        \hat{\pi}(0) \leq
        \hat{\pi}(1) \leq \cdots \leq \hat{\pi}(\ell^*)
    \end{align}
which allows us to bound the quantities in \eqref{eq:hitting-progress} as follows:  
for all $\ell \in [0,\ell^*-1]$ we have
\begin{align}
    E_hT_{\ell+1} \leq (\ell+1) / p(\ell+1) \stackrel{\eqref{eq:birth-death-threshold}}{\leq}   (\ell + 1)\frac{n}{n-\ell} (1 + \exp(\alpha\beta)) 
    \\ \leq \ell^*\frac{n}{n-\ell^*} (1 + \exp(\alpha\beta)) \stackrel{\eqref{eq:equilibrium-alphabeta}}{= } \frac{n^2}{n-\ell^*}  \ . 
\end{align}
    \item Putting things together, and by definition $E_aT_b$, we have 
    \begin{align}
        \thit(\ell^*) = E_0T_{\ell^*}    = E_0T_1 + E_1T_2 + \cdots + E_{\ell^*-1}T_{\ell^*}\leq (\ell^*+1) \frac{n^2}{n-\ell^*} 
    \end{align}
    where in the second identity we use linearity of expectation. 
\end{enumerate}
 
\end{proof}

\subsection{Proof of Theorem~\ref{th:hitting-threshold-UB-LB} (Part~\ref{th:hitting-threshold-LB-itm})}
\begin{proof}
Since the threshold technology  is $0$-steep (``flat''),  Theorem~\ref{thm:hitting-time-general} implies the first lower bound in Part~\ref{th:hitting-threshold-LB-itm}:
\begin{align}\label{eq:hitting-threshold:lb-ell}
        \thit(\tau) \geq \left(\exp(\alpha\beta)\cdot \frac{\ell+1}{n-\ell}\right)^{\tau-\ell } && \text{for all } 0 \leq \ell \leq \tau \ .  
    \end{align} We next prove the second lower bound in Part~\ref{th:hitting-threshold-LB-itm}: \begin{align}\label{eq:hitting-threshold:lb}
        \thit(\tau) \geq (1+1/\ell^*)^{\tau-\ell^* -1 } \ . 
    \end{align} We take $\ell^* \leq \ell \leq \ell^* + 1$ and apply the lower bound in \eqref{eq:hitting-threshold:lb-ell}. Observe that the definition of $\ell^*=\frac{n}{1 + \exp(\alpha\beta)}$ in \eqref{eq:equilibrium-alphabeta} implies $n - \ell^* = \frac{n\cdot \exp(\alpha\beta)}{1 + \exp(\alpha\beta)}$ and thus, since $\ell \geq \ell^*$, we get 
\begin{align}
        \exp(\alpha\beta)\cdot \frac{\ell+1}{n-\ell}\geq \exp(\alpha\beta)\cdot\frac{\ell^*+1}{n-\ell^*} = \exp(\alpha\beta)\cdot\frac{\frac{n}{1 + \exp(\alpha\beta)}+1}{\frac{n\cdot \exp(\alpha\beta)}{1 + \exp(\alpha\beta)}} \\ = 1 + \frac{1 + \exp(\alpha\beta)}{n} = 1 + \frac{1}{\ell^*} \ . 
    \end{align} 
    By plugging this into  \eqref{eq:hitting-threshold:lb-ell}, and using $\ell \leq \ell^*+1$,  we get  the desired lower bound \eqref{eq:hitting-threshold:lb}.
\end{proof}

\newpage

\section{Further Applications: governance, partnerchains  or \\ new blockchains}\label{sec:partnerchain}

Anecdotal evidence from the field suggests that different projects have different costs and different success probabilities. For example, launching a new chain building on an existing blockchain is more economical and likelier to succeed, than a brand new blockchain. 

At the lowest end of the cost spectrum, governance projects tend to be cheap, corresponding to a low $\alpha$ in our model. Note that a low $\alpha$ makes $\ell^*$ higher \eqref{eq:equilibrium-alphabeta}, which makes convergence to $\ell^*$ faster. Revisiting Figure~\ref{fig:hitting-revisited}, a low $\alpha$ is likely to place the system in the fast hitting region (the interval from 0 to approximately $0.5$ in the graph) where the hitting time is growing approximately linearly, with a low slope, in the cost.
The implication is that it is easier for a system designer to make such projects work. Conventional wisdom in the field is in line with this result, as governance projects are perceived to be easier to succeed.

\begin{figure}
    \centering
    \includegraphics[width=0.6\linewidth]{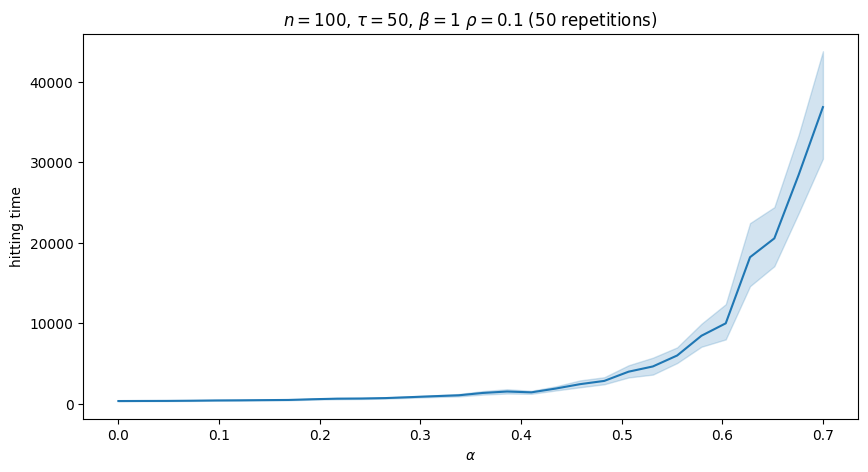}
    \caption{Hitting time for threshold technologies with increasing costs $\alpha$ (same as Fig.~\ref{fig:dynamics-time}(left) in main text reproduced here for convenience).}
    \label{fig:hitting-revisited}
\end{figure}

In the middle of the spectrum we have layer 2 projects, such as restaking. That is the region in the figure between 0.5 and 0.6, where there is a sharper rise in the hitting time.
A common example of such projects would be \emph{sidechains}, which rely on an existing mainchain to reduce costs and complexity of operations, as well as inherit some security guarantees directly from the mainchain. Note that in this setting, the mainchain plays no ``active role''. Some recent designs suggest having \emph{partnerchains} of a mainchain where the latter provides some ``help'' to the newly birthed chain during its launch phase. This can be in the form of resources (e.g., validators) from the mainchain that are partially contributing also to the partnerchain. One of the benefits of these designs is the \emph{reduced costs} for the contributors (validators) who can re-use their skills and tools for the partnerchain (as opposed to the case in which these skills and tools have to be acquired from scratch). A concrete example related to Cardano can be found in Section~\ref{sec:cardano}.

At the other end of the spectrum, layer 1 projects such as brand new blockchains will have a substantially higher cost of contribution $\alpha$, as the contributors will need new knowledge, new software and possibly even new hardware to be able to contribute in the first place. These projects are notoriously hard to launch in the field, which is in line with our model.

\subsection{Example: Cardano partnerchain framework}\label{sec:cardano}

The next  example considers Cardano as the mainchain and its ``validators'', i.e.  \emph{stake pool operators (SPOs)}, to as potential contributors who will be validating blocks for the new chain.

\begin{example}[Cardano partnerchain framework]\label{ex:Cardano-Midnight}
   Consider Cardano as the mainchain and a partnerchain inviting  Cardano's SPOs to validate its blocks during its launch period:
\begin{itemize}
    \item SPOs stake mainchain tokens, and receive some amount of partnerchain tokens for running some additional (partnerchain) software to validate partnerchains blocks;
    \item SPOs cannot be staked as the mainchain cannot verify activities inside the partnerchain; the latter can only distribute rewards, but cannot (directly) punish misbehaving validators;
    \item The main utility of  SPOs is in making the partnerchain successful, so that the price of its tokens (rewards) reaches a desired value after a target period of time.
\end{itemize}
\end{example}
The first key feature of the above parnerchain framework, is the \emph{reduced cost} for contributing to the new chain, specifically, SPOs will encounter only a moderate additional cost, as they already possess the required hardware and knowledge to contribute to the partnerchain:
\begin{quote}
    \emph{[...] SPO participation: any Cardano SPO can become a validator for a partner chain with minimal costs related to hardware upgrades, no required permissions, and no software expenses. The toolkit simplifies this configuration process. \citep{IOHK-PC-02}}  
\end{quote}
The fact that their activity and performance for the partnerchain does not put any risk on the staked tokens in the mainchain, is an additional factor that contributes to reduce the costs (measured as risk of being staked). 
In this work we investigate the following questions:
\begin{itemize}
    \item \emph{Do higher rewards suffice to achieve enough collaboration and thus  successful launch?}
    \item \emph{Is there an actual benefit in the partnerchain design with reduced costs?}
\end{itemize}
\begin{remark}[non-verifiable efforts]\label{rem:non-verifiable}
    We shall consider the case in which, for security and technical reasons, the mainchain is mainly responsible during the launch phase of the sidechain. In particular,  rewards are provided by the  mainchain  and the latter cannot monitor the efforts $a_i$ of the players. 
\end{remark}

In this section, we dive into two key parameters that affect the dynamics, namely, the rewards $\gamma$ and the costs $\alpha$. Though rewards can always be increased, we show below that their effect is limited to the dynamics remaining in the good region (above the threshold $\tau$). In particular, the \emph{time} to reach that region depends on the costs $\alpha$, the smaller the better,  and it is \emph{independent} of rewards $\gamma$. A key feature of \emph{partnerchains} designs is the ability to \emph{reduce} these costs $\alpha$ in two ways (and combinations thereof):
\begin{itemize}
    \item By involving existing validators of the mainchain in the partnerchain;  
    \item By providing \emph{multi-token} payments, meaning that also some amount of \emph{mainchain tokens} are rewarded. 
\end{itemize}

\

\subsubsection{Multi-token rewards to lower the costs} \label{sec:multi-token-rewards}
We consider the extension in which rewards consist of both the new native token and another currency with a relatively stable price (the latter introduces a cost for the designer but effectively reduces $\alpha$). Within the Cardano partnerchain framework (Example~\ref{ex:Cardano-Midnight}), multi token rewards have been proposed:
\begin{quote}
\emph{
    [...] Babel fees will solve tokenomics issues for new networks while allowing SPOs to be compensated in ada. \citep{IOHK-PC}}
\end{quote}

\section{Other technology functions}\label{sec:other-technoligies}
In this section, we discuss a number of natural technology functions, and extend the analysis in Section~\ref{sec:threshold} to some of them.  

\subsection{Quadratic technologies (Metcalfe's Law)}
We consider the symmetric technology following Metcalfe’s Law \citep{ALABI201723}, that is, a quadratic growth:
\begin{align}
\label{eq:quadratic-technology}
V(\ell) = \frac{\ell^2}{\tau}
\end{align}
where the rescaling parameter  $\tau>0$ controls the region where the system has a ``slow start'' because for small $\ell$ one additional contribution results in a ``moderate'' increase in the system value: $ V(\ell+1) - V(\ell) = \frac{2\ell +1}{\tau}$. This is analogous to the threshold function \eqref{eq:threshold-t}, where additional contributions provide no system increase at all for $\ell <\tau$ . 

\begin{theorem}
    For any Metcalfe's Law (quadratic) technology \eqref{eq:quadratic-technology} with airdrop rewards \eqref{eq:rewards} the following holds:
    \begin{enumerate}
        \item For $\alpha < \frac{1}{\tau n}$, there is a bad pure Nash equilibrium for $\rho \leq \alpha \tau n$, while for $\rho >  \alpha \tau n$ there is  only the  good pure Nash equilibrium. 
        \item For $ \frac{1}{\tau n} < \alpha < \frac{1}{\tau}$, there exist both a bad and a good pure Nash equilibrium for all airdrop rewards $\rho$. Moreover, for $\rho> \alpha \tau$, logit dynamics for vanishing noise ($\beta \rightarrow \infty$) select only the good equilibrium.   
        \item For $\alpha > \frac{1}{\tau}$, there is a bad pure Nash equilibrium, and logit dynamics for vanishing noise ($\beta \rightarrow \infty$) select this bad equilibrium no matter the airdrop rewards $\rho$. 
    \end{enumerate}
\end{theorem}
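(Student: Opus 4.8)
The plan is to reduce the statement to two tools already established: the pure‑Nash characterization for anonymous binary technologies with uniform costs (the Corollary following Theorem~\ref{th:Nash-equilibria}) and the potential‑maximization description of stochastically stable states (Theorem~\ref{th:stable-equilibria}). The only technology‑specific input needed is the elementary computation of the marginal increments of $V(\ell)=\ell^2/\tau$, namely $V(\ell+1)-V(\ell)=\frac{2\ell+1}{\tau}$, which are \emph{strictly increasing} in $\ell$; this convexity drives the whole picture.

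\textbf{Pure Nash equilibria.} Substituting the increments into the two conditions of the corollary (with all $c_i=\alpha$), a profile with $\ell$ contributors is an equilibrium iff $\frac{\rho}{n}\le\frac{\alpha\tau}{2\ell+1}$ (required when $\ell<n$) and $\frac{\rho}{n}\ge\frac{\alpha\tau}{2\ell-1}$ (required when $\ell>0$). For every $1\le\ell\le n-1$ the right‑hand side of the second inequality strictly exceeds that of the first, so the two requirements are jointly unsatisfiable: \emph{no ``interior'' profile is a pure Nash equilibrium}, and the only candidates are $\ell=0$ and $\ell=n$. The surviving checks are one‑sided (and since $V$ is strictly increasing the qualitative side‑conditions $V(\ell\pm1)\neq V(\ell)$ hold automatically): $\ell=0$ is an equilibrium iff $\rho\le\alpha\tau n$, and $\ell=n$ is an equilibrium iff $\rho\ge\frac{\alpha\tau n}{2n-1}$.

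\textbf{Stochastic stability.} By Theorem~\ref{th:stable-equilibria}, for $\beta\to\infty$ the dynamics concentrates on $\potmax_\rho=\argmax_\ell\phi(\ell)$ with $\phi(\ell)=\frac{\rho}{n}V(\ell)-\alpha\ell=\frac{\rho}{n\tau}\ell^2-\alpha\ell$. For $\rho>0$ this is strictly convex in $\ell$, hence maximized only at an endpoint of $\{0,\dots,n\}$; comparing $\phi(0)=0$ with $\phi(n)=n\big(\tfrac{\rho}{\tau}-\alpha\big)$ shows that the selected state is the all‑contribute profile when $\rho>\alpha\tau$, the all‑abstain profile when $\rho<\alpha\tau$, and (since $\binom{n}{0}=\binom{n}{n}=1$) an equal mixture of the two when $\rho=\alpha\tau$. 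Note that $\potmax_\rho$ consists of equilibria, so the selected state is automatically one of $\ell=0$ or $\ell=n$ identified above.

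\textbf{Specialization to the three regimes (using $\rho\in[0,1]$).} If $\alpha<\frac{1}{\tau n}$ then $\alpha\tau n<1$, so the bad equilibrium exists precisely for $\rho\le\alpha\tau n$ and for $\rho>\alpha\tau n$ only $\ell=n$ survives (it is an equilibrium there because $\rho>\alpha\tau n\ge\frac{\alpha\tau n}{2n-1}$). If $\frac{1}{\tau n}<\alpha<\frac{1}{\tau}$ then $\alpha\tau n>1\ge\rho$, so the bad equilibrium persists for every feasible $\rho$, while $\alpha\tau<1$ makes $\rho>\alpha\tau$ feasible, so a good equilibrium also exists (for $\rho\ge\frac{\alpha\tau n}{2n-1}$) and is the unique stochastically stable state once $\rho>\alpha\tau$. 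If $\alpha>\frac{1}{\tau}$ then $\alpha\tau>1\ge\rho$, hence $\phi(n)<\phi(0)$ for every feasible $\rho$, so the bad equilibrium (which always exists, as $\alpha\tau n>n\ge\rho$) is selected regardless of $\rho$. I do not anticipate a real obstacle: the single genuinely structural step is the emptiness of the interior‑equilibrium window, and the only place demanding care is the boundary bookkeeping in $\rho$ — in particular keeping the equilibrium‑existence threshold $\frac{\alpha\tau n}{2n-1}$ separate from the stochastic‑stability threshold $\alpha\tau$, and handling the knife‑edge $\rho=\alpha\tau$ correctly.
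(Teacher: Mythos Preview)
Your proposal is correct and follows essentially the same approach as the paper: both derive the existence conditions $\rho\le\alpha\tau n$ for $\ell=0$ and $\rho\ge\frac{\alpha\tau n}{2n-1}$ for $\ell=n$ from the Nash characterization, and both appeal to Theorem~\ref{th:stable-equilibria} to reduce stochastic stability to comparing $\phi(0)$ and $\phi(n)$. Your treatment is in fact somewhat tidier than the paper's in two places --- you explicitly rule out interior equilibria via the incompatibility of the two marginal conditions, and you invoke convexity of $\phi$ to force the maximizer to an endpoint, whereas the paper bounds $\phi(\ell)$ above and below by $\alpha(\ell^2/n-\ell)$ --- but these are stylistic refinements of the same argument.
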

\begin{proof}
    We observe the following
    \begin{enumerate}
        \item The bad pure Nash equilibrium exists if and only if $\rho \leq \alpha n \tau$ for $\rho \in [0,1]$. 
        \item The good pure Nash equilibrium exists if and only if $\rho \geq \frac{\alpha n \tau}{2m-1}$ for $\rho \in [0,1]$. 
    \end{enumerate}
    As for the logit dynamics, we evaluate the set of potential maximizers in Theorem~\ref{th:stable-equilibria}, which for the technology in \eqref{eq:quadratic-technology} under consideration is
    \begin{align}\label{eq:potential-max-quadratic}
     \potmax_\rho =  \argmax_\ell \left\{\frac{\rho}{n} \frac{\ell^2}{\tau}- \alpha \ell \right\}   
     \end{align}
    and we observe that for $\alpha > \frac{1}{\tau}$ the expression in \eqref{eq:potential-max-quadratic} is negative for all $\ell > 0$, and thus it is maximized for $\ell = 0$, the bad pure Nash equilibrium. Conversely, for $\alpha < \frac{1}{\tau}$ we can distinguish two cases depending on $\rho$: For $1 > \rho > \alpha \tau$, the expression in \eqref{eq:potential-max-quadratic} is larger than  $\alpha(\frac{\ell^2}{n} - \ell)$ and thus it is maximized for $\ell = n$, the good pure Nash equilibrium. For $\rho < \alpha \tau$, the expression in \eqref{eq:potential-max-quadratic} is less than $\alpha(\frac{\ell^2}{n} - \ell)$ and thus it is maximized for $\ell = 0$. 
\end{proof}

\subsection{Linear technologies}
	Consider the linear technology function of the form $V(a) = \lambda_V \cdot \ell$, where $\ell = \sum_i a_i$ and $a_i \in \{0,1\}$. Suppose further that $c_i = \alpha$ for all $i$. Then the largest system achievable value at equilibrium for an airdrop reward $\rho$ is 
	\begin{align}
		V^*(\rho) = 
		\begin{cases}
			\lambda_V \cdot n & \text{if } \rho > \frac{n\cdot \alpha}{\lambda_V} \\
			0 & \text{if } \rho < \frac{n\cdot \alpha}{\lambda_V}  
		\end{cases}\ .
	\end{align}
	Linear technologies with uniform costs exhibit an ``all-or-nothing'' property: we can incentivize all or none to contribute, depending on the ratio  $\frac{n \cdot \alpha}{\lambda_V}$ and the reward $\rho\in[0,1]$. In particular, for $\frac{n \cdot \alpha}{\lambda_V}>1$ the project will ``fail'' as it is impossible to incetivize players, while for  $\frac{n \cdot \alpha}{\lambda_V}\leq 1$ the designer can optimize her revenue by setting  $\rho$ equal to this ratio. This gives her a revenue equal to 
	$(1-\rho)\cdot V^* - d_V = (1-\frac{n \cdot \alpha}{\lambda_V})\cdot \lambda_V \cdot n -d_V = \lambda_V \cdot n - \alpha \cdot n^2 - d_V$, while each  player gets a monetary reward $\alpha \cdot n$.

We next consider logit dynamics and  its equilibria for this technology. 
The corresponding potential function \eqref{eq:unifrom-rew-potential} is 
\begin{align}
         \phi(\ell) = \frac{\rho}{n} \lambda_V \ell - \alpha \ell = \left(\frac{\rho}{n}\lambda_V - \alpha\right) \ell \ . 
    \end{align}
    For $\Gamma:= \frac{\rho}{n}\lambda_V - \alpha$, the stationary distribution of the corresponding birth and death process is  
    \begin{align}
       \hat{\pi}(\ell) = \binom{n}{\ell} \cdot \pi_\gamma(\ell) =  \binom{n}{\ell} \cdot \frac{\exp(\beta\phi(\ell))}{Z}=\binom{n}{\ell} \cdot\frac{\exp(\beta\ell \Gamma)}{Z}
    \end{align}
    where 
    \begin{align}
        Z = \sum_{\ell=0}^n  \binom{n}{\ell} \cdot  \exp(\beta\phi(\ell))  =  \sum_{\ell=0}^n \binom{n}{\ell} \cdot   \exp(\beta\ell\Gamma) = (1 + \exp(\beta \Gamma))^n
    \end{align}
    where the last identity is due to the Binomial Theorem. 
    Note that  for any airdrop reward $\rho > \frac{\alpha n}{\lambda_V}$ we have $\Gamma>0$ which favors states with higher $\ell$. Also remember that the designer aims at maximizing the expected profit  according to  \eqref{eq:profit-expected}. 
    
\begin{theorem}\label{th:hitting-linear}
    For any linear technology, the expected time $\thit(\ell)$ for the dynamics to reach the  state where $\ell$ players contribute can be lower bounded as
    \begin{align}
        \thit(\ell) \geq \left(\exp(-\Gamma\beta)\cdot \frac{\ell_1+1}{n-\ell_1}\right)^{\ell-\ell_1-1}\ , && \Gamma = \frac{\rho}{n}\lambda_V - \alpha \ . 
    \end{align}
\end{theorem}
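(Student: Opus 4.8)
The plan is to apply Theorem~\ref{thm:hitting-time-general} (the general hitting-time lower bound) to the linear technology. The key observation is that a linear technology $V(\ell) = \lambda_V \cdot \ell$ is $\lambda_V$-steep on every interval, since $V(\ell+1) - V(\ell) = \lambda_V$ exactly for all $\ell$. Thus I would invoke part~2 of Theorem~\ref{thm:hitting-time-general} with $s = \lambda_V$: for any interval $I = [\ell_1, \ell_2]$ and any target $\ell > \ell_2$,
\begin{align*}
\thit(\ell) \geq \left(\exp\left(-\beta\left(\tfrac{\rho}{n}\lambda_V - \alpha\right)\right) \cdot \tfrac{\ell_1+1}{n-\ell_1}\right)^{\ell_2 - \ell_1}.
\end{align*}
Recognizing that $\Gamma = \frac{\rho}{n}\lambda_V - \alpha$ is precisely the per-step potential increment appearing in the exponent, this rewrites as $\left(\exp(-\Gamma\beta)\cdot \frac{\ell_1+1}{n-\ell_1}\right)^{\ell_2-\ell_1}$.

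The only remaining step is bookkeeping on the interval endpoints: to obtain a statement about reaching a generic state $\ell$, I would take the interval $I = [\ell_1, \ell-1]$, so $\ell_2 = \ell - 1$ and $\ell_2 - \ell_1 = \ell - \ell_1 - 1$, and then $\ell > \ell_2$ holds automatically. Substituting $\ell_2 - \ell_1 = \ell - \ell_1 - 1$ into the bound yields exactly the claimed expression
\begin{align*}
\thit(\ell) \geq \left(\exp(-\Gamma\beta)\cdot \frac{\ell_1+1}{n-\ell_1}\right)^{\ell - \ell_1 - 1}.
\end{align*}
Alternatively, one can re-derive the drift bound directly from the recursive formula for the stationary distribution (Lemma~\ref{le:stationary:recursive}): using the linear potential $\phi(\ell) = \Gamma\ell$, the drift is $\hat\pi(\ell+1)/\hat\pi(\ell) = \frac{n-\ell}{\ell+1}\exp(\beta\Gamma)$, which is bounded above on $I$ by $d_I = \frac{n-\ell_1}{\ell_1+1}\exp(\beta\Gamma)$, and then part~1 of Theorem~\ref{thm:hitting-time-general} gives $\thit(\ell) \geq (1/d_I)^{|I|}$, the same bound.

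I do not expect any genuine obstacle here: this is essentially a specialization of an already-proved general theorem, and the main content is identifying $s = \lambda_V$ and simplifying the exponent to $\Gamma\beta$. The only thing to be careful about is the off-by-one in the interval length (whether the exponent is $\ell - \ell_1$ or $\ell - \ell_1 - 1$), which is fixed by the precise choice $\ell_2 = \ell - 1$ in the interval $I = [\ell_1, \ell-1]$ so that the target $\ell$ strictly exceeds $\ell_2$ as required by the hypothesis of Theorem~\ref{thm:hitting-time-general}.
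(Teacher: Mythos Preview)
Your proposal is correct and matches the paper's own proof essentially line for line: the paper likewise observes that the linear technology is $\lambda_V$-steep, takes the interval $I=[\ell_1,\ell-1]$ so that $\ell_2=\ell-1$, and then invokes Theorem~\ref{thm:hitting-time-general} (identifying the drift bound $d_I=\exp(\beta\Gamma)\cdot\frac{n-\ell_1}{\ell_1+1}$) to obtain the exponent $\ell_2-\ell_1=\ell-\ell_1-1$. Your alternative derivation via the explicit drift ratio and part~1 of Theorem~\ref{thm:hitting-time-general} is also exactly what the paper alludes to when it writes ``the lower bound follows from the first part of Theorem~\ref{thm:hitting-time-general}.''
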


\begin{proof}
We apply Theorem~\ref{thm:hitting-time-general} and observe that the function is $\lambda_V$-steep. For $I =[\ell_1,\ell_2]$ with $\ell_2=\ell-1$, we have $d_I=\exp(\beta(\frac{\rho}{n}\lambda_V - \alpha))\cdot \frac{n-\ell_1}{\ell_1+1}$, and the lower bound on the hitting time follows from the first part of  Theorem~\ref{thm:hitting-time-general}.
     
\end{proof}

Here the intuition is analogous to the threshold technology: The dynamics converge quickly to some average level $\ell^* = \frac{n}{p_{-\Gamma \beta}}= \frac{n}{\exp(-\Gamma \beta)}$ and from there it takes exponential time to reach higher values. In particular, increasing the steepness $\lambda_V$ increases $\ell^*$ (see Figure~\ref{fig:linear-lam-reps}). 

\begin{figure}
    \centering
    \includegraphics[scale=.5]{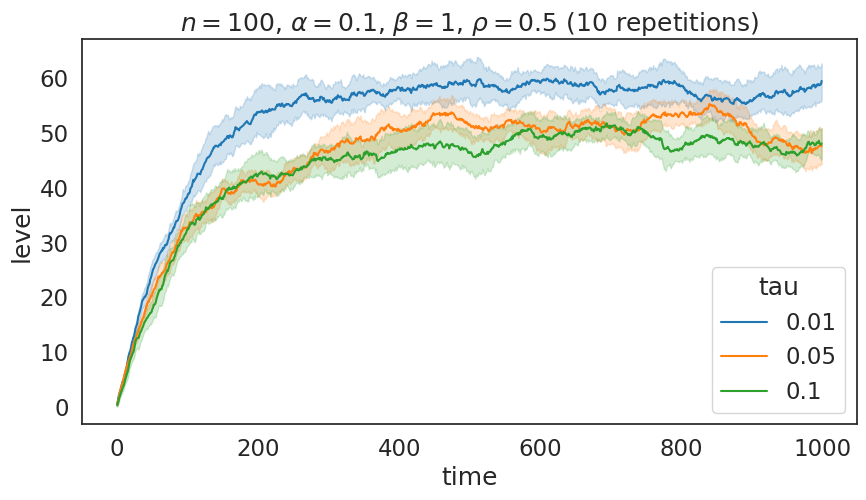}
    \caption{Linear technologies for different $\lambda_V\in\{10,20,100\}$ ($\lambda_V = 1/\tau$, 10 repetitions and 95\% accuracy).}
    \label{fig:linear-lam-reps}
\end{figure}

\subsection{S-shaped technologies}
S-shaped technology functions  stemming from P2P systems have been considered in  \cite{buragohain2003game} who studied the pure Nash equilibria in the resulting game. Here we consider a slightly different S-shaped technology functions, inspired by the previous work, namely:
\begin{align}
    V(\ell) = \frac{(\ell/\tau)^c}{1 + (\ell/\tau)^c}
\end{align}
for $c>0$ and $\ell = \sum_i a_i$ with $a_i \in [0,1]$. These functions have been used to model the probability of users in a P2P system of having their request accepted depending on the contribution level, though in \cite{buragohain2003game} this depends on the player strategy $a_i$ only.

\subsection{Concave technologies}
These technologies have been considered in \cite{bramoulle2007public}. That work considers the value being encoded by some underlying social network (graph). For binary efforts and a social network being fully connected (complete)  graph, we obtain a system value $V(\ell)$ where $V(\cdot)$ is concave. An example of such technologies is $V(\ell) = (1/\tau)\cdot \ell^c$ with $ c< 1$. We note that such technologies naturally exhibit a ``saturation'' phenomenon, since for large $\ell$ the marginal contribution of adding an extra effort does not compensate the required cost $\alpha$.

    


\section{Discussion and Further Material}\label{sec:further}

\subsection{Example of profit optimal airdrop with logit}
 Consider the AND technology with two players in Example~\ref{ex:no-bad-eq}, that is, the threshold technology in the special case $\tau=n$. In this case, we can express all quantities as a function of the number $\ell$ of actively participating players. In particular, 
 \begin{align}
   V(\pi_\rho) = \sum_{\ell=0}^n \pi_\rho(\ell)\cdot V(\ell) \cdot  \binom{n}{\ell} && \pi_\rho(\ell) = & \frac{\exp(\beta\phi(\ell))}{Z} 
 \end{align}
 where $\phi(\ell)=\phi(a)$ for $\ell = \sum_i a_i$, and $Z$ is the normalizing constant, i.e., $Z = \sum_a \exp(\beta\phi(a))$. For the case of $n=2$ players, and assuming $V(0)=V(1) = 0$, we get this simpler expression:
 \begin{align}
   V(\pi_\rho) & = \high{V}\cdot \frac{ 
   \exp(\beta \rho \high{V}/2)}{\exp(2\alpha\beta) + 2 \cdot \exp(\alpha\beta) + 
   \exp(\beta \rho \high{V}/2)} 
 \end{align}
 where $\high{V}:=V(2)$. 

\subsection{Logit dynamics and plausible parameter values}\label{sec:logit-values}

The traditional bibliography in decision and game theory assumes that between two choices, people will always choose the one with the highest expected payoff. This has been questioned when real people are involved in decision making, in the lab or in the field.

The seminal work of \cite{goeree2001ten} shows that in lab experiments, small payoff changes can make a very large difference in play in games, even when game theory predicts no change. Consider for example a matching pennies game  in Figure~\ref{fig:matchingpennies}  -- same as Table~1 in \cite{goeree2001ten}.


\begin{figure}
	\begin{center}
		\begin{matrixgame}[top={Left (48\%), Right (52\%)}, left={Top (48\%), Bottom (52\%)}, bottom={Symmetric Matching Pennies}]
			\payoffmatrix[w=2.5]{80,40 & 40, 80 \\ 
				40, 80 & 80, 40}
		\end{matrixgame}

		\begin{matrixgame}[top={Left (16\%), Right (84\%)}, left={Top (96\%), Bottom (4\%)}, bottom={Asymmetric Matching Pennies}]
			\payoffmatrix[w=2.5]{320,40 & 40, 80 \\ 
				40, 80 & 80, 40}
		\end{matrixgame}
		
		\begin{matrixgame}[top={Left (80\%), Right (20\%)}, left={Top (8\%), Bottom (92\%)}, bottom={Reversed Asymmetry}]
			\payoffmatrix[w=2.5]{44,40 & 40, 80 \\ 
				40, 80 & 80, 40}
		\end{matrixgame}
	\end{center}
\caption{Three different matching pennies games (with choices percentages) as in \cite{goeree2001ten}.}
\label{fig:matchingpennies}
\end{figure}

In the standard version, with symmetric payoffs, play follows the Nash equilibrium, with 48\% choosing left and 52\% choosing right. Changing the payoff for player one in UL from 80 to 320, changes behavior drastically for both players. 96\% choose top and 84\% choose right. Interestingly, the mixed strategy Nash equilibrium prediction changes for player 2 but not for player 1 (who plays to keep 2 indifferent). A game-theoretically unimportant change in payoffs has brought actual behaviour from very close to Nash to something not even resembling Nash.

Based on these facts, recent literature proposes concepts that evaluate the payoffs of all choices, and predict play probabilities that are related to these payoffs. A particularly successful specification is logit. In games, logit can be used as part of the Quantal Response Equilibrium concept, where players better respond (instead of best respond) given their beliefs, using choice probabilities according to logit. This is an equilibrium concept, since players perfectly anticipate each other's choice probabilities.

The application of logit is not limited to QRE. \cite{crawford2007level} show that logit can improve the fit of non-equilibrium models such as level-k in auctions. \cite{georganas2011english} finds that models with logit errors generally outperform models with normal errors, including when fitting the Nash equilibrium model itself.

Logit models have been criticized for being too flexible, in the sense that free parameter choice can fit many types of behavior. For this application, we restrict the parameter values to those found in earlier work. In particular, \cite{georganas2011english}  fits a series of models with logit errors. The parameter values that gave the best fit in the heterogeneous level-k model were 1.1 and 1.13. The fact that different types (level 1 and level 2) with very different predicted bidding, were found to have a very similar logit parameter (1.13) is encouraging.

\subsection{Further experiments and graphs}
In this section we provide additional experiments to convey further intuition about the convergence time and the distribution of the logit dynamics for threshold functions.

\begin{figure}
    \centering
    \includegraphics[scale=.5]{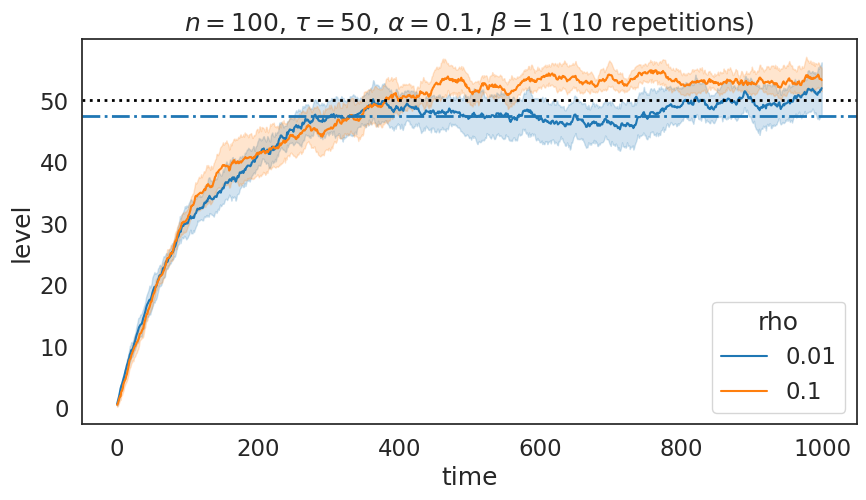}
    \caption{Higher airdrop allocations $\rho$ help stabilizing the dynamics in the ``high value'' region.}
    \label{fig:threshold-rho-reps}
\end{figure}

\begin{figure}
    \centering
    \includegraphics[scale=.5]{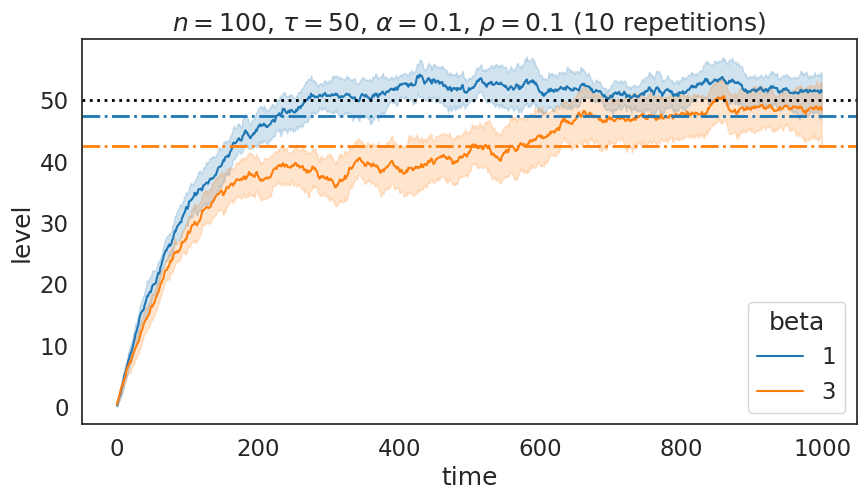}
    \caption{Higher $\alpha\beta$ slows down the convergence to the ``high value'' region (here $\alpha =0.1$ and $\beta \in \{1,3\}$).}
    \label{fig:dynamics-different-alphabeta}
\end{figure}

\begin{figure}[h!]
    \centering
    \includegraphics[scale=0.4]{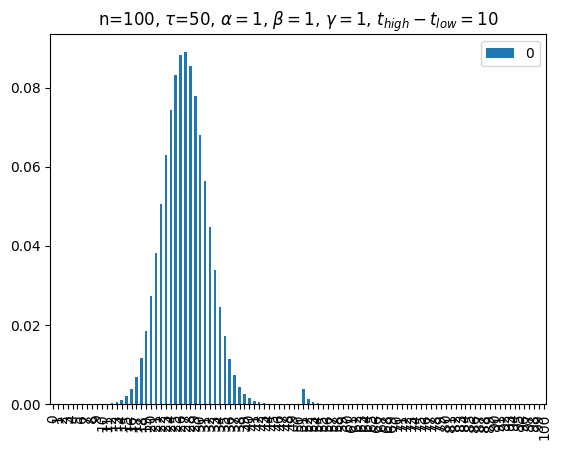}
    \includegraphics[scale=0.4]{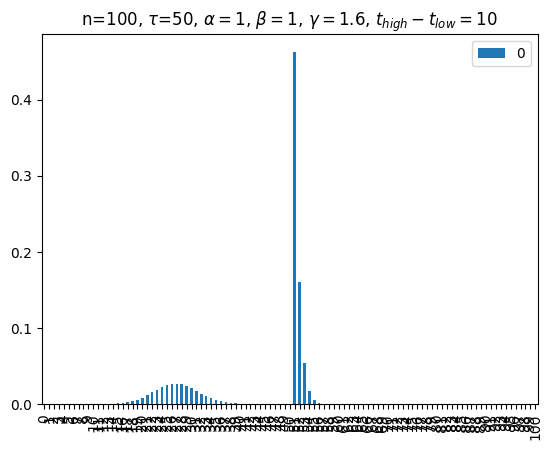}

    \includegraphics[scale=0.4]{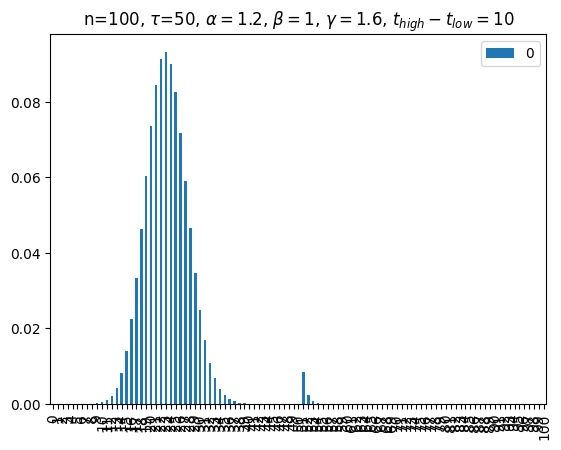}
\includegraphics[scale=0.4]{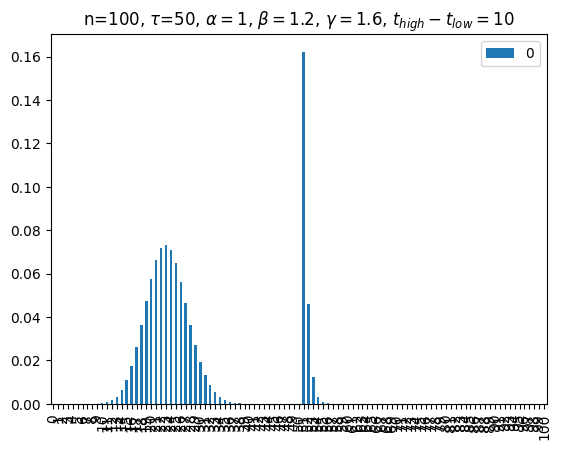}
    \caption{The stationary distribution of the birth and death process for some parameter combinations:  Larger rewards $\gamma$ increase the probability of success (top). Larger costs or larger inverse noise decreases the probability of success (bottom). The latter phenomenon is due to the rewards being set below some critical threshold.}
    \label{fig:bd-process-n=60}
\end{figure}

\end{document}